\documentclass[10pt,amssymb,amsfont,a4paper]{article}
\usepackage{latexsym,amssymb,amsmath,amsthm,color}
\usepackage{geometry}
%\usepackage[notcite,notref]{showkeys}
%\geometry{left=2cm,top=2cm,right=2cm,bottom=2cm}
%\usepackage{setspace}
%\onehalfspacing
%\usepackage{graphics}
\usepackage{url}
\usepackage{graphicx}
\usepackage[utf8]{inputenc}
\usepackage[numbers]{natbib}
\usepackage{authblk}
\usepackage{tikz}
%\usepackage{pstricks,pst-node}
%\usepackage{times}
%\usepackage{listings}
%\lstset{
%   breaklines=true,
%   basicstyle=\ttfamily}
\usetikzlibrary{decorations.pathreplacing}
\usetikzlibrary{shapes.misc, fit}

\DeclareMathOperator{\diag}{Diag}

\DeclareMathOperator{\dd}{d}

\theoremstyle{definition}
\newtheorem{definition}{Definition}
\newtheorem{construction}{Construction}

\theoremstyle{plain}
\newtheorem{theorem}{Theorem}

\newtheorem{lemma}[definition]{Lemma}

\newtheorem{corollary}[definition]{Corollary}

\title{Private Information Retrieval from Locally Repairable Databases with Colluding Servers}
\author{Umberto Mart{\'i}nez-Pe\~{n}as \thanks{umberto.martinez@uva.es}}
\affil{IMUVa-Mathematics Research Institute\\University of Valladolid, Spain}

\date{}

\begin{document}

\maketitle

\begin{abstract}
We consider information-theoretical private information retrieval (PIR) from a coded database with colluding servers. We target, for the first time, locally repairable storage codes (LRCs). We consider any number of local groups $ g $, locality $ r $, local distance $ \delta $ and dimension $ k $. Our main contribution is a PIR scheme for maximally recoverable (MR) LRCs based on linearized Reed--Solomon codes, which achieve the smallest field sizes among MR-LRCs for many parameter regimes. In our scheme, nodes are identified with codeword symbols and servers are identified with local groups of nodes. Only locally non-redundant information is downloaded from each server, that is, only $ r $ nodes (out of $ r+\delta-1 $) are downloaded per server.  The PIR scheme achieves the (download) rate $ R = (N - k - rt + 1)/N $, where $ N = gr $ is the length of the MDS code obtained after removing the local parities, and for any $ t $ colluding servers such that $ k + rt \leq N $. For an unbounded number of stored files, the obtained rate is strictly larger than those of known PIR schemes that work for any MDS code. Finally, the obtained PIR scheme can also be adapted when communication between the user and each server is performed via linear network coding, achieving the same rate as previous PIR schemes for this scenario but with polynomial finite field sizes, instead of exponential. Our rates are equal to those of PIR schemes for Reed--Solomon codes, but Reed--Solomon codes are incompatible with the MR-LRC property or linear network coding, thus our PIR scheme is less restrictive in its applications. 

\textbf{Keywords:} Distributed storage; linearized Reed--Solomon codes; locally repairable codes; network coding; private information retrieval.
\end{abstract}

\section{Introduction} \label{sec intro}

Private information retrieval (PIR), introduced in \cite{chor-focs, chor-acm}, consists in retrieving a file from a database without revealing the index of the retrieved file to the servers, hence keeping the user's preference for a file private. In this work, we consider information-theoretical privacy, meaning that an undesired observer (e.g., the servers) with unbounded computational power may not obtain any information on the file index. Originally \cite{chor-focs, chor-acm}, databases were considered to store files using a \textit{repetition} code, that is, each server stores one copy of each file, and servers were not considered to communicate with each other (\textit{collude}) in order to gain information on the file index. 

As it was pointed out in these seminal works, an obvious solution to the PIR problem is to download the entire database. However, this turns out to be wasteful and much higher \textit{download rates}, or simply \textit{rates} (the size of the file divided by the amount of downloaded data), can be achieved when more than one server is used. It was also shown in these seminal works that downloading the whole database is the only solution in the single-server case (for information-theoretical privacy). 

As is well-known, databases often suffer from data erasures (due to disk failures). Using a repetition code, that is, storing copies of the same files across multiple servers, yields an unacceptably high overhead (i.e., unacceptably low information rate). PIR from a database where data is stored after being encoded by a non-repetition code (\textit{coded database}) was considered in \cite{blackburn, fazeli, extrabit}. However, the number of servers in these works is either larger than the number of files or grows as the overhead of the storage code decreases, which are not practical scenarios.

Explicit PIR schemes from a database that uses an MDS storage code, and for $ \tau \geq 1 $ colluding servers, were obtained in \cite{hollanti, razan-it, razan-isit, zhang-ge}. The optimal rate for a fixed number $ m $ of stored files is
\begin{equation}
R = \frac{N - k - \tau + 1}{N} \cdot \left( 1 - \left( \frac{k + \tau - 1}{N} \right)^{m} \right)^{-1},
\label{eq highest rates}
\end{equation}
for the cases $ \tau = 1 $ \cite{banawan-ulukus, sun-jafar} and $ k = 1 $ \cite{sun-jafar-colluding}, although (\ref{eq highest rates}) is not optimal in general \cite{conjecture}. If the MDS storage code has dimension $ k $ and length $ N $, \textit{universal} PIR schemes (compatible with any $ (N,k) $ MDS storage code) were obtained in \cite{razan-it, razan-isit} with rate $ R = 1/ N $, for $ \tau = N-k $, and rate $ R = (N-k)/N $ for $ \tau = 1 $. For $ \tau = 1 $ or $ k = 1 $, these rates tend to the optimal rate (\ref{eq highest rates}) as the number of files $ m $ increases, and in fact they become very close already for moderate values of $ m $ (since $ m $ appears in the exponent). We also remark that, in practical scenarios, we usually have $ m \gg N $.

For $ \tau > 1 $ and $ k > 1 $, the optimal rate is unknown in general, but the rate of the PIR scheme in \cite{hollanti}, 
\begin{equation} \label{eq camilla rate}
R = \frac{N - k - \tau + 1}{N} ,
\end{equation}
is the highest known (and it is unmatched by other schemes) for an unrestricted number of files and is always strictly larger than that of known universal PIR schemes \cite{razan-it}. 

However, we will disregard in our rate comparisons the scheme in \cite{hollanti}, since it is \textit{more restrictive} as it only works for generalized Reed--Solomon (GRS) storage codes \cite{reed-solomon}, but not for other MDS storage codes, which makes them incompatible with any MR-LRC coded database (except for the trivial cases). More concretely, the scheme in \cite{hollanti} uses the coordinate-wise product, and the only MDS codes that satisfy the desired properties with respect to such products are GRS codes \cite{marquez}. Furthermore, GRS codes are not LRCs and can never be (except for trivial cases) the MDS codes obtained after removing the local parities of an MR-LRC. This is because GRS codes have linear field sizes and MR-LRCs require super-linear field sizes \cite{gopi}. Hence the scheme from \cite{hollanti} may not be adapted to be used in an MR-LRC coded database. For similar reasons, the PIR scheme from \cite{hollanti} is also incompatible with linear network coding (see Subsection \ref{subsec pir over lin coded networks} and \cite{pir-networks}). 

As pointed out in the distributed storage literature, MDS codes are not suitable for large databases, which are increasingly more important due to the spread of Big Data. This is because repairing one single failed node with an MDS code requires contacting and downloading the content of a large number of nodes, resulting in a high repair latency. Locally repairable codes (LRCs), introduced in \cite{gopalan, kamath} and already applied in practice (by, e.g., Microsoft \cite{azure} and Facebook \cite{xoring}), allow to repair a single erasure (or generally $ \delta - 1 $ erasures per \textit{local group}, for a \textit{local distance} $ \delta $) by contacting at most a number $ r $, called \textit{locality}, of other nodes. Maximally recoverable (MR) LRCs were introduced in \cite{blaum-RAID, gopalan-MR} and are optimal in the following strong sense: Given parameters $ k $, $ r $, $ \delta $ and number of local groups $ g $, if there is an erasure pattern that an MR-LRC with such parameters cannot correct, then such a pattern cannot be corrected by any other LRC with such parameters. Such patterns can be described easily, see Definitions \ref{def LRC} and \ref{def MR}. 

To the best of our knowledge, no work has provided PIR schemes with rates as high as (\ref{eq camilla rate}) for optimal LRCs, MR-LRCs or MDS codes obtained from puncturing MR-LRCs for general parameters $ g $, $ r $, $ \delta $, $ k $, $ \tau $ and $ N = gr $. As explained above, the only PIR schemes for some MDS coded databases with PIR rates as large as (\ref{eq camilla rate}) are those from \cite{hollanti}. However, these PIR schemes only work for GRS codes, since they make use of coordinate-wise products and GRS codes are the only MDS codes suitable for such products \cite{marquez}. Moreover, GRS codes can never be the MDS codes obtained from puncturing MR-LRCs (except for trivial cases) since they have linear field sizes and MR-LRCs require super-linear field sizes \cite{gopi}. We will circumvent the limitations of \cite{hollanti} by making use of coordinate-wise matrix products (see Section \ref{sec coordinate matrix product}). 

In this work, we provide the first PIR scheme for a class of MR-LRC storage codes that cover general parameters. We consider the MR-LRCs from \cite{universal-lrc}, based on linearized Reed-Solomon codes \cite{linearizedRS}. We achieve download rates as in (\ref{eq camilla rate}), matching GRS storage codes (which cannot be used in our scenario since they cannot be obtained as MDS codes from puncturing MR-LRCs, thus our scheme would be \textit{less restrictive}). %Both the MR-LRCs based on linearized Reed-Solomon codes and our proposed PIR scheme, cover all possible parameters of MR-LRCs, that is, arbitrary $ g $, $ r $, $ \delta $ and $ k $, with the restriction that $ k \leq gr $ and with code length $ n = g(r+\delta-1) $. The finite field size for defining these codes is roughly $ \max \{ r+\delta-1, g \}^r $, the smallest known for MR-LRCs in general in the paremeter regime $ r \leq gr - k $. 

It is worth mentioning that some works not only consider the download rate, but also the upload rate (see \cite{skachek, yekhanin} and the references therein). However, in the case of our scheme, the upload cost may be considered negligible compared to the download one by folding the scheme by a large number (see the discussion before Definition \ref{def pir rate}). Thus, we do not consider the upload rate in this work. We also mention that PIR schemes for databases coded with Minimum Bandwidth Regenerating (MBR) codes were proposed in \cite{lavauzelle}. MBR codes are also used for local repair in distributed storage. However, their objective is minimizing the amount of downloaded data instead of the number of contacted nodes, as is the case for LRCs.

We remark that a PIR scheme from Gabidulin storage codes \cite{gabidulin} has been recently given in \cite{pir-networks}. Gabidulin codes can also be used to construct MR-LRCs \cite{calis}, and the PIR scheme in \cite{pir-networks} can be used for such MR-LRCs. However, the required field size would be at least $ 2^{gr} $ (see Subsection \ref{subsec pir over lin coded networks}), exponential in $ g $ and the code length $ N = gr $, in contrast with polynomial field sizes $ \max \{ r+\delta-1, g \}^r $ for our scheme (note also that $ r $ is preferably small). The main objective in \cite{pir-networks} is to give a PIR scheme where communication between the user and each server is via linear network coding \cite{linearnetwork}. We will see in Subsection \ref{subsec pir over lin coded networks} that our PIR scheme can be used in the same scenario, achieving the same rate, but with polynomial field sizes as noted above. %Once again, the PIR scheme from \cite{pir-networks}, although it achieves the same rates as our scheme for some MDS codes, our PIR scheme is \textit{less restrictive}, as \cite{pir-networks} only works for generalized Reed-Solomon codes, which are not compatible with linear network coding, whereas ours is.

The paper is organized as follows. In Section \ref{sec PIR}, we formulate general PIR schemes for MR-LRC databases. In Section \ref{sec coordinate matrix product}, we develop the mathematical tools for our PIR scheme. In Section \ref{sec main PIR scheme}, we explicitly describe our PIR scheme. Finally, in Section \ref{sec further considerations}, we discuss some further considerations.

\section{Private Information Retrieval from MR-LRC databases} \label{sec PIR}

In this section, we describe the \textit{private information retrieval} (PIR) model that we consider in this work. To the best of our knowledge, no general information-theoretical PIR model has yet been proposed for LRC databases. 

Let $ q $ be a prime power. We will denote by $ \mathbb{F} $ an arbitrary field, and by $ \mathbb{F}_q $ the finite field with $ q $ elements. Usually, we will consider $ \mathbb{F} = \mathbb{F}_{q^r} $, where $ r \geq 1 $ will be the locality of the considered storage codes. We will also denote by $ \mathbb{F}^{m \times n} $ the set of $ m \times n $ matrices with entries in $ \mathbb{F} $, and we denote $ \mathbb{F}^n = \mathbb{F}^{1 \times n} $. For a positive integer $ n $, we denote $ [n] = \{ 1,2, \ldots, n\} $. Given $ \mathcal{R} \subseteq [n] $, we denote by $ \mathbf{c}_\mathcal{R} \in \mathbb{F}^{|\mathcal{R}|} $, $ A|_\mathcal{R} \in \mathbb{F}^{m \times | \mathcal{R} |} $ and $ \mathcal{C}_\mathcal{R} \subseteq \mathbb{F}^{| \mathcal{R} |} $ the restrictions of a vector $ \mathbf{c} \in \mathbb{F}^n $, a matrix $ A \in \mathbb{F}^{m \times n} $ and a code $ \mathcal{C} \subseteq \mathbb{F}^n $, respectively, to the coordinates indexed by $ \mathcal{R} $.

Next, we recall the definitions of \textit{locally repairable codes} \cite{gopalan, kamath}.

\begin{definition}[\textbf{Locally repairable codes \cite{gopalan, kamath}}] \label{def LRC}
We say that a code $ \mathcal{C} \subseteq \mathbb{F}^n $ is a \textit{locally repairable code} (LRC) with $ (r,\delta) $ localities if there exists a partition $ [n] = \Gamma_1 \cup \Gamma_2 \cup \ldots \cup \Gamma_g $, where $ \Gamma_i \cap \Gamma_j = \varnothing $ if $ i \neq j $, such that
\begin{enumerate}
\item
$ | \Gamma_j | = r + \delta - 1 $, and
\item
$ {\rm d}_H(\mathcal{C}_{\Gamma_j}) \geq \delta $,
\end{enumerate}
for $ j = 1,2, \ldots, g $. The set $ \Gamma_j $ is called the $ j $th local group, $ r $ is called the \textit{locality}, and $ \delta $ is called the \textit{local distance}. A \textit{node} will simply be an index $ j \in [n] $.
\end{definition}

Note that $ n = (r+\delta-1)g $, but necessarily it must hold that $ k = \dim(\mathcal{C}) \leq gr $. Maximally recoverable LRCs, introduced in \cite[Def. 2.1]{blaum-RAID} and \cite[Def. 6]{gopalan-MR}, can correct all information-theoretically correctable erasure patterns for the given locality constraints $ r $, $ \delta $, $ k $ and $ g $. Such patterns are formed by any $ \delta-1 $ erasures per local group, plus any $ h = gr-k $ extra erasures anywhere else, as the following definition shows.

\begin{definition} [\textbf{Maximal recoverability \cite{blaum-RAID, gopalan-MR}}] \label{def MR}
We say that an LRC $ \mathcal{C} \subseteq \mathbb{F}^n $ with $ (r,\delta) $ localities is \textit{maximally recoverable} (MR) if, for any $ \Delta_j \subseteq \Gamma_j $ with $ | \Delta_j | = r $, for $ j = 1,2, \ldots,g $, the restricted code $ \mathcal{C}_\Delta \subseteq \mathbb{F}^{N} $ is MDS, where $ \Delta = \bigcup_{j=1}^g \Delta_j $ and $ N = | \Delta | = gr $. We say for short that $ \mathcal{C} $ is an MR-LRC. We will usually call $ \mathcal{C}_\Delta \subseteq \mathbb{F}^N $ a \textit{remaining MDS code} of $ \mathcal{C} $ (there is one for each choice of $ \Delta_j $'s).
\end{definition} 

MR-LRCs can correct more erasure patterns than most LRCs with optimal minimum distance with respect to the bound in \cite[Th. 2.1]{kamath}, such as Tamo-Barg codes \cite{tamo-barg}.

Our PIR schemes will work for the MDS codes that remain after puncturing an MR-LRC, as in Definition \ref{def MR}. As explained in Section \ref{sec intro}, the PIR schemes from \cite{hollanti} only work for GRS codes, which may not be the MDS code $ \mathcal{C}_\Delta $ that remains after puncturing an MR-LRC $ \mathcal{C} $ as in Definition \ref{def MR} since GRS have linear field sizes and MR-LRCs require super-linear field sizes (in the code length).

We will consider collusion patterns formed by unions of local groups. Note that collusion patterns strongly depend on each particular scenario. We now argue why in the LRC case it actually makes more sense to consider collusion patterns formed by unions of local groups than any set of nodes corresponding to codeword symbols:
\begin{enumerate}
\item
Communication is much more frequent and necessary among nodes inside a local group, as local correction is the most frequent type and global correction is only left to catastrophic erasure patterns. For this reason, local groups could be considered as separate storage units or even be placed geographically apart. In the extreme case $ h = gr - k = 0 $, MR-LRCs are simply Cartesian products of codes. In this case, no communication across local groups is needed and they could store completely independent data. In contrast, for repetition codes and MDS codes, communication across servers is needed to correct average erasure patterns. For these reasons, we will consider local groups $ \Gamma_j \subseteq [n] $, rather than individual nodes $ j \in [n] $, as \textit{corruptable units}. In other words, the \textit{$ j $th server} will be identified with the $ j $th local group. Thus a subset $ T \subseteq [g] $ of \textit{colluding servers} is the same as the corresponding $ |T| $ local groups and $ (r+\delta-1)|T| $ colluding nodes. 
\item
To help reduce the downloaded amount of data from the $ j $th server (i.e. $ j $th local group, see Item 1), we assume that only $ r $ stored symbols from each codeword are downloaded from that server, since the remaining $ \delta-1 $ nodes only contain locally redundant information. This means that we consider an MDS code that remains after puncturing the MR-LRC as in Definition \ref{def MR}. As explained above, GRS codes (the MDS codes considered in \cite{hollanti}) cannot come from MR-LRC after puncturing since GRS codes have linear field sizes and MR-LRCs require super-linear field sizes \cite{gopi}, hence the scheme in \cite{hollanti} does not apply to the MR-LRC scenario.
\end{enumerate}

Fix now positive integers $ b $, $ m $ and $ k \leq N = gr $, and let $ \mathbf{x}^1, $ $ \mathbf{x}^2, $ $ \ldots, $ $ \mathbf{x}^m \in \mathbb{F}_{q^r}^{b \times k} $ be the $ m $ files to be stored. Arrange them as 
$$ X = \left( \begin{array}{c}
\mathbf{x}^1 \\
\vdots \\
\mathbf{x}^m
\end{array} \right) \in \mathbb{F}_{q^r}^{bm \times k} . $$
Consider an arbitrary MR-LRC $ \mathcal{C}_{glob} \subseteq \mathbb{F}_{q^r}^n $ (the \textit{global code}) with a generator matrix of the form
\begin{equation}
G_{glob} = G_{out} \diag_g(A) \in \mathbb{F}_{q^r}^{k \times n},
\label{eq gen matrix of global code}
\end{equation}
where $ n = g(r+\delta-1) $, $ N = gr $, $ G_{out} \in \mathbb{F}_{q^r}^{k \times N} $ is a generator matrix of some $ k $-dimensional \textit{outer code} $ \mathcal{C}_{out} \subseteq \mathbb{F}_{q^r}^N $, $ A \in \mathbb{F}_q^{r \times (r + \delta - 1)} $ is a generator matrix of an $ (r+\delta-1,r) $ MDS code (the \textit{local code}), and $ \diag_g(A) = \diag(A, A, \ldots, A) \in \mathbb{F}_q^{N \times n} $ is the block-diagonal matrix with $ A $ repeated in the main block-diagonal $ g $ times.

Each file $ \mathbf{x}^i $ is then encoded into $ \mathbf{y}^i = \mathbf{x}^i G_{glob} \in \mathbb{F}_{q^r}^{b \times n} $, where $ \mathbf{y}^i_{\Gamma_j} \in \mathbb{F}_{q^r}^{b \times (r+\delta-1)} $ is stored in the $ j $th server. Let $ \Delta_j \subseteq \Gamma_j $ be the first $ r $ coordinates in $ \Gamma_j $ and assume that the first $ r $ columns of $ A $ form the identity matrix (i.e. $ A $ is systematic). Then if we disregard the nodes in $ \Gamma_j \setminus \Delta_j $, the part of the $ i $th file stored in the $ j $th server is $ \mathbf{z}^i_j \in \mathbb{F}_{q^r}^{b \times r} $, for $ j = 1,2, \ldots, g $, where
\begin{equation}
Z = XG_{out} = \left( \begin{array}{c}
\mathbf{z}^1 \\
\vdots \\
\mathbf{z}^m
\end{array} \right) = \left( \mathbf{z}_1, \mathbf{z}_2, \ldots, \mathbf{z}_g \right) = \left( \begin{array}{ccc}
\mathbf{z}_1^1 & \ldots & \mathbf{z}_g^1 \\
\vdots & \ddots & \vdots \\
\mathbf{z}_1^m & \ldots & \mathbf{z}_g^m
\end{array} \right) \in \mathbb{F}_{q^r}^{b m \times N} .
\label{eq z's if A systematic}
\end{equation}
Thus the remaining MDS code coincides with the outer code: $ \mathcal{C}_\Delta = \mathcal{C}_{out} $.

The parameter $ b $ will be called the \textit{folding parameter}. It is a common parameter that is usually considered in Coding Theory implicitly. It allows to store a larger amount of data while the encoding and decoding operations grow linearly with $ b $.

We now formalize general private information retrieval schemes for MR-LRCs.

\begin{definition} \label{def PIR}
A private information retrieval (PIR) scheme for an MR-LRC distributed storage system as described above consists, for each $ i = 1,2, \ldots, m $, of:
\begin{enumerate}
\item
Random \textit{queries} sent to the $ j $th server to retrieve the $ i $th file:
$$ \mathbf{q}_j^i = \left( \mathbf{q}_j^{i,1}, \mathbf{q}_j^{i,2}, \ldots, \mathbf{q}_j^{i,m} \right) \in \mathbb{F}_{q^r}^{b m r} , $$
where $ \mathbf{q}_j^{i, \ell} \in \mathbb{F}_{q^r}^{br} $, for $ \ell = 1, 2, \ldots, m $ and $ j = 1,2,\ldots, g $. 
\item
The corresponding \textit{response} $ \mathbf{r}_j^i = \mathbf{z}_j \cdot \mathbf{q}_j^i \in \mathbb{F}_{q^r}^r $ of the $ j $th server when requested the $ i $th file (the server only knows $ \mathbf{z}_j $ and $ \mathbf{q}_j^i $ in principle), for $ j = 1,2, \ldots, g $. The product $ \cdot $ will be given in (\ref{eq def inner product}). We will denote $ \mathbf{r}^i = (\mathbf{r}_1^i, \mathbf{r}_2^i, \ldots, \mathbf{r}_g^i) \in \mathbb{F}_{q^r}^N $.
\item
A number $ s $ of iterations of Items 1 and 2, until the $ i $th file can be recovered from the responses $ \mathbf{r}^i $ in Item 2.
\item
A reconstruction function with input the $ s $ responses $ \mathbf{r}^i $ and output the $ i $th file $ \mathbf{x}^i $.
\end{enumerate}
\end{definition}

A major difference with \cite{oneshot, hollanti} is that we do not use the usual inner product $ \mathbf{z} \cdot \mathbf{q} $, but a generalization of it (see Definition \ref{def matrix-schur products} below).

As usual in the PIR literature, our goal is to maximize the download rate, which is defined as the file size divided by the amount of downloaded data. The upload cost may be considered negligible by further \textit{folding} the scheme $ b^\prime \gg 1 $ times, thus a total of $ bb^\prime $ times. Disregarding also local redundancies, the download rate is as follows. 

\begin{definition} \label{def pir rate}
We define the \textit{download rate}, or simply rate, of a PIR scheme given as in Definition \ref{def PIR} as 
\begin{equation}
R = \frac{bk}{Ns}.
\label{eq rate}
\end{equation}
\end{definition}

We require information-theoretical privacy for a given number $ t $ of colluding servers (i.e. colluding local groups).

\begin{definition} \label{def PIR privacy}
We say that a PIR scheme as in Definition \ref{def PIR} protects against $ t $ colluding servers (i.e., colluding local groups) if, for every $ T \subseteq [g] $ of size $ t $, in each iteration of the scheme we have that
$$ I \left( \left( \mathbf{q}_j^i \right) _{j \in T} ;i \right) = 0, $$
where $ I(X; Y) $ denotes the mutual information between two random variables $ X $ and $ Y $ (see \cite[Ch. 12]{cover}).
\end{definition}

\section{Coordinate-wise and inner matrix products} \label{sec coordinate matrix product}

In this section, we define and collect the main properties of inner and coordinate-wise matrix products used in our PIR scheme (see Item 2 in Definition \ref{def PIR}). Such products will be crucial for the MR-LRCs from \cite{universal-lrc} based on linearized Reed-Solomon codes \cite{linearizedRS}. Thus we start by revisiting such codes.

\subsection{MR-LRCs based on linearized Reed-Solomon codes} \label{subsec lin RS codes}

Fix $ r \geq 1 $ and let $ \sigma : \mathbb{F}_{q^r} \longrightarrow \mathbb{F}_{q^r} $ be given by $ \sigma(a) = a^q $, for all $ a \in \mathbb{F}_{q^r} $. We next define linear operators as in \cite[Def. 20]{linearizedRS}. 

\begin{definition}[\textbf{\cite{linearizedRS}}] \label{def linearized operators}
Fix $ a \in \mathbb{F}_{q^r} $, and define its $ i $th norm as $ N_i(a) = \sigma^{i-1}(a) \cdots \sigma(a)a $, for $ i \in \mathbb{N} $. We define the $ \mathbb{F}_q $-linear operator $ \mathcal{D}_a^i : \mathbb{F}_{q^r} \longrightarrow \mathbb{F}_{q^r} $ by
\begin{equation}
\mathcal{D}_a^i(\beta) = \sigma^i(\beta) N_i(a) ,
\label{eq definition linear operator}
\end{equation}
for all $ \beta \in \mathbb{F}_{q^r} $, and all $ i \in \mathbb{N} $. Define also $ \mathcal{D}_a = \mathcal{D}_a^1 $ and observe that $ \mathcal{D}_a^{i+1} = \mathcal{D}_a \circ \mathcal{D}_a^i $, for $ i \in \mathbb{N} $. Denote by $ \mathbb{F}_{q^r} [\mathcal{D}_a] $ the polynomial ring in the operator $ \mathcal{D}_a $, for $ a \in \mathbb{F}_{q^r} $.
\end{definition}

Recall that the \textit{skew polynomial} ring $ \mathbb{F}_{q^r}[x ; \sigma] $, introduced in \cite{ore}, is the polynomial ring on the variable $ x $ but with non-commutative product given by the rule
\begin{equation}
x \beta = \sigma(\beta) x,
\label{eq defining property of skew polynomial ring}
\end{equation}
for all $ \beta \in \mathbb{F}_{q^r} $. For $ F = \sum_{i=0}^d F_i x^i \in \mathbb{F}_{q^r}[x ; \sigma] $, we define
\begin{equation}
F^{\mathcal{D}_a} = \sum_{i=0}^d F_i \mathcal{D}_a^i \in \mathbb{F}_{q^r}[\mathcal{D}_a],
\label{eq def of F^D_a}
\end{equation}
for $ a \in \mathbb{F}_{q^r} $. In the following, for $ F = \sum_{i=0}^d F_i x^i \in \mathbb{F}_{q^r}[x ; \sigma] $, for $ a \in \mathbb{F}_{q^r} $ and for $ \boldsymbol\beta = (\beta_1, \beta_2, \ldots, \beta_r) \in \mathbb{F}_{q^r}^r $, we use the notation
\begin{equation}
F^{\mathcal{D}_a}(\boldsymbol\beta) = \left( F^{\mathcal{D}_a}(\beta_1), F^{\mathcal{D}_a}(\beta_2), \ldots, F^{\mathcal{D}_a}(\beta_r) \right) \in \mathbb{F}_{q^r}^r.
\label{eq notation for vector polynomial ev}
\end{equation}
Next, given $ \mathbf{a} = (a_1, a_2, \ldots, a_g) \in \mathbb{F}_{q^r}^g $, we define the total evaluation vector of $ F $ at $ (\mathbf{a}, \boldsymbol\beta) $ as
\begin{equation}
F^{\mathcal{D}_\mathbf{a}}(\boldsymbol\beta) = \left( F^{\mathcal{D}_{a_1}}(\boldsymbol\beta), F^{\mathcal{D}_{a_2}}(\boldsymbol\beta), \ldots, F^{\mathcal{D}_{a_g}}(\boldsymbol\beta) \right) \in \mathbb{F}_{q^r}^N ,
\label{eq notation for total vector polynomial ev}
\end{equation}
where $ N = gr $. The following definition is a particular case of \cite[Def. 31]{linearizedRS}. 

\begin{definition} [\textbf{Linearized Reed-Solomon codes \cite{linearizedRS}}] \label{def lin RS codes}
Fix a primitive element $ \gamma \in \mathbb{F}_{q^r}^* $ and let $ \mathbf{a} = (\gamma^0, \gamma^1, \ldots, \gamma^{g-1}) $ $ \in \mathbb{F}_{q^r}^g $. Fix an ordered basis $ \boldsymbol\beta = (\beta_1, $ $ \beta_2, $ $ \ldots, \beta_r) $ $ \in \mathbb{F}_{q^r}^r $ of $ \mathbb{F}_{q^r} $ over $ \mathbb{F}_q $. For $ k = 0,1,2, \ldots, N $, where $ N = gr $, we define the $ (N,k) $ linearized Reed-Solomon (LRS) code as
\begin{equation*}
\mathcal{C}_{N,k}(\mathbf{a}, \boldsymbol\beta) = \left\lbrace F^{\mathcal{D}_\mathbf{a}}(\boldsymbol\beta) \in \mathbb{F}_{q^r}^N \mid F \in \mathbb{F}_{q^r}[x;\sigma] , \deg(F) < k \textrm{ or } F = 0 \right\rbrace \subseteq \mathbb{F}_{q^r}^N.
\end{equation*}
Here, the degree $ \deg(F) $ of a non-zero skew polynomial $ F = \sum_{i \in \mathbb{N}} F_i x^i \in \mathbb{F}_{q^r}[x; \sigma] $, where $ F_i \in \mathbb{F}_{q^r} $ for $ i \in \mathbb{N} $, is defined as the maximum $ i \in \mathbb{N} $ such that $ F_i \neq 0 $.
\end{definition}

Linearized Reed-Solomon codes recover Reed-Solomon codes \cite{reed-solomon} by setting $ r = 1 $ and $ \beta_1 = 1 $, and they recover Gabidulin codes \cite{gabidulin} by setting $ g = 1 $.

In \cite[Const. 1]{universal-lrc}, a construction of MR-LRCs was given based on linearized Reed-Solomon codes (Definition \ref{def lin RS codes}). This construction recovers Reed-Solomon codes if $ r = \delta = 1 $, and it recovers Cartesian products of codes if $ h = gr-k = 0 $. 

\begin{construction} [\textbf{LRS-based MR-LRC \cite{universal-lrc}}] \label{construction 1}
Fix the positive integers $ g $, $ r $ and $ \delta $, and choose any \textit{base field} size $ q > \max \{ r+\delta-3, g \} $. Next choose a dimension $ k = 1,2, \ldots, N $, where $ N = gr $, and:
\begin{enumerate}
\item
\textit{Outer code}: An $ (N,k) $ linearized Reed-Solomon code $ \mathcal{C}_{out} = \mathcal{C}_{N,k}(\mathbf{a}, \boldsymbol\beta) \subseteq \mathbb{F}_{q^r}^N $ as in Definition \ref{def lin RS codes}.
\item
\textit{Local codes}: Any linear $ (r+\delta-1, r) $ MDS code $ \mathcal{C}_{loc} \subseteq \mathbb{F}_q^{r + \delta - 1} $.
\item
\textit{Global code}: Let $ \mathcal{C}_{glob} \subseteq \mathbb{F}_{q^r}^n $, with $ n = (r + \delta - 1)g = N + (\delta - 1)g $, be given by
$$ \mathcal{C}_{glob} = \mathcal{C}_{out} \diag_g(A) \subseteq \mathbb{F}_{q^r}^n, $$
where $ A \in \mathbb{F}_q^{r \times (r + \delta - 1)} $ is any generator matrix of $ \mathcal{C}_{loc} $.
\end{enumerate}
\end{construction}

The following result is \cite[Th. 2]{universal-lrc} and states the MR and LRC properties of the global code $ \mathcal{C}_{glob} $ in Construction \ref{construction 1}.

\begin{theorem}[\textbf{\cite{universal-lrc}}] \label{th constr 1 is MR}
Let $ \mathcal{C}_{glob} \subseteq \mathbb{F}_{q^r}^n $ be the global code from Construction \ref{construction 1}, and let $ \Gamma_j \subseteq [n] $ be the subset of coordinates from $ (r+\delta-1)(j-1) + 1 $ to $ (r + \delta - 1)j $, for $ j = 1,2, \ldots, g $. Then $ \mathcal{C}_{glob} \subseteq \mathbb{F}_{q^r}^n $ has $ (r,\delta) $ localities, local groups $ \Gamma_1, \Gamma_2, \ldots, \Gamma_g $, and is maximally recoverable. Furthermore, its field size may be chosen as $ \max \{ r+\delta-3, g \}^r $.
\end{theorem}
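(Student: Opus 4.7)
The plan is to verify the two defining properties in turn: the $(r,\delta)$-locality structure given by the partition $\Gamma_1, \ldots, \Gamma_g$, and the maximal recoverability condition that every remaining code $\mathcal{C}_{glob}|_\Delta$ is MDS of length $N = gr$.

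The locality part is essentially immediate from the block-diagonal structure of $\diag_g(A)$. By construction, $|\Gamma_j| = r + \delta - 1$. Each codeword of $\mathcal{C}_{glob}$ has the form $\mathbf{c} \cdot \diag_g(A)$ with $\mathbf{c} \in \mathcal{C}_{out}$, and the restriction to $\Gamma_j$ equals $\mathbf{c}^{(j)} A$, where $\mathbf{c}^{(j)}$ is the $j$th block of $\mathbf{c}$. Hence $\mathcal{C}_{glob}|_{\Gamma_j}$ is a subcode of the row space of $A$, which is the MDS local code $\mathcal{C}_{loc}$ of minimum distance $\delta$. Subcodes can only have larger minimum distance, so the second locality condition holds.

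For maximal recoverability, fix $\Delta_j \subseteq \Gamma_j$ with $|\Delta_j| = r$ for every $j$, and let $\Delta = \bigcup_j \Delta_j$. Since $\mathcal{C}_{loc}$ is $(r+\delta-1, r)$-MDS, the $r \times r$ submatrix $A|_{\Delta_j}$ is invertible over $\mathbb{F}_q$. A generic codeword of $\mathcal{C}_{glob}|_\Delta$ is
\[
\left( F^{\mathcal{D}_{a_1}}(\boldsymbol\beta)\, A|_{\Delta_1},\ F^{\mathcal{D}_{a_2}}(\boldsymbol\beta)\, A|_{\Delta_2},\ \ldots,\ F^{\mathcal{D}_{a_g}}(\boldsymbol\beta)\, A|_{\Delta_g} \right),
\]
for some $F \in \mathbb{F}_{q^r}[x;\sigma]$ with $\deg(F) < k$. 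The key observation is that each operator $\mathcal{D}_{a_j}^i$, and hence $F^{\mathcal{D}_{a_j}}$, is $\mathbb{F}_q$-linear on $\mathbb{F}_{q^r}$: the Fr\"obenius $\sigma$ is $\mathbb{F}_q$-linear and multiplication by the norm $N_i(a_j) \in \mathbb{F}_{q^r}$ is $\mathbb{F}_{q^r}$-linear. Because the entries of $A|_{\Delta_j}$ lie in $\mathbb{F}_q$, this $\mathbb{F}_q$-linearity lets us slide the matrix inside the evaluation, yielding $F^{\mathcal{D}_{a_j}}(\boldsymbol\beta)\, A|_{\Delta_j} = F^{\mathcal{D}_{a_j}}(\boldsymbol\beta^{(j)})$ with $\boldsymbol\beta^{(j)} = \boldsymbol\beta \, A|_{\Delta_j} \in \mathbb{F}_{q^r}^r$. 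Since $A|_{\Delta_j}$ is invertible over $\mathbb{F}_q$ and $\boldsymbol\beta$ is an $\mathbb{F}_q$-basis of $\mathbb{F}_{q^r}$, each $\boldsymbol\beta^{(j)}$ is again an $\mathbb{F}_q$-basis of $\mathbb{F}_{q^r}$. Therefore $\mathcal{C}_{glob}|_\Delta$ is itself a linearized Reed-Solomon code (in the general sense of \cite{linearizedRS}, allowing a different $\mathbb{F}_q$-basis at each evaluation point $a_j$) of length $N$ and dimension $k$. Such codes are MDS, and this delivers the maximal recoverability of $\mathcal{C}_{glob}$.

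The only nontrivial step is the commutation argument $F^{\mathcal{D}_{a_j}}(\boldsymbol\beta)\,A|_{\Delta_j} = F^{\mathcal{D}_{a_j}}(\boldsymbol\beta^{(j)})$, which hinges on the fact that $A$ is chosen with entries in the \emph{base} field $\mathbb{F}_q$ rather than in $\mathbb{F}_{q^r}$. This is precisely what makes Construction~\ref{construction 1} work, and it is where the whole proof ultimately rests; once it is in hand, the MDS property of the restricted code reduces cleanly to the MDS property of linearized Reed-Solomon codes from \cite[Def.\ 31]{linearizedRS}.
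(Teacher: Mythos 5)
The paper itself offers no proof of this theorem: it is quoted verbatim as Theorem~2 of \cite{universal-lrc} and used as a black box. So there is no ``paper's proof'' to compare against; what you have written is a reconstruction of the argument from the cited reference. Your reconstruction is correct, and it matches the approach in \cite{universal-lrc}, whose central idea is exactly the one you isolate: since $A$ has entries in the base field $\mathbb{F}_q$ and the operators $F^{\mathcal{D}_a}$ are $\mathbb{F}_q$-linear, multiplying an evaluation vector $F^{\mathcal{D}_{a_j}}(\boldsymbol\beta)$ on the right by an invertible $A|_{\Delta_j} \in \mathbb{F}_q^{r\times r}$ is the same as evaluating $F$ at the transformed basis $\boldsymbol\beta^{(j)} = \boldsymbol\beta\,A|_{\Delta_j}$, so the punctured code is again a linearized Reed--Solomon code in the general sense of \cite[Def.~31]{linearizedRS} and is therefore MDS.

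Two small points worth tightening, though neither is a gap. First, the locality step: $\mathcal{C}_{glob}|_{\Gamma_j}$ is an $\mathbb{F}_{q^r}$-subspace of the row space of $A$ over $\mathbb{F}_{q^r}$, i.e.\ of the extension code $\mathcal{C}_{loc}\otimes_{\mathbb{F}_q}\mathbb{F}_{q^r}$, not of $\mathcal{C}_{loc}$ itself; these share the parity-check matrix and hence the minimum distance $\delta$, so the conclusion stands, but the codes live over different fields. Second, the final invocation that the restricted code is MDS relies on the evaluation points $\gamma^0,\ldots,\gamma^{g-1}$ lying in $g$ distinct $\sigma$-conjugacy classes of $\mathbb{F}_{q^r}^*$; this is where the hypothesis $q>g$ of Construction~\ref{construction 1} enters, and it would be worth saying so explicitly, since without it the generalized linearized Reed--Solomon code is not MSRD (equivalently, not MDS in Hamming metric). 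With that caveat noted, the proof is sound.
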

%
%We also note that, as observed in the Introduction, the recent PIR scheme based on Gabidulin storage codes \cite{pir-networks} could also be used as a PIR scheme on MR-LRC storage codes, and it turns out that the assumptions made below are equivalent to the assumptions made in \cite{pir-networks} (see also Subsection \ref{subsec pir over lin coded networks}).

\subsection{Definition and linearity properties of the products} \label{subsec definition}

Fix an ordered basis $ \boldsymbol\beta = ( \beta_{1}, $ $ \beta_{2}, $ $ \ldots, $ $ \beta_{r} ) \in \mathbb{F}_{q^r}^r $ of $ \mathbb{F}_{q^r} $ over $ \mathbb{F}_q $. Denote by $ M_{\boldsymbol\beta} : \mathbb{F}_{q^r}^r \longrightarrow \mathbb{F}_q^{r \times r} $ the corresponding \textit{matrix-representation map}, given by 
\begin{equation}
M_{\boldsymbol\beta} \left(  \mathbf{x}  \right) = \left( \begin{array}{cccc}
x^1_1 & x^1_2 & \ldots & x^1_r \\
x^2_1 & x^2_2 & \ldots & x^2_r \\
\vdots & \vdots & \ddots & \vdots \\
x^r_1 & x^r_2 & \ldots & x^r_r \\
\end{array} \right),
\label{eq def matrix representation map}
\end{equation}
for $ \mathbf{x} = (x_1, x_2, \ldots, x_r) \in \mathbb{F}_{q^r}^r $, where $ x_j^1, x_j^2, \ldots, x_j^r \in \mathbb{F}_q $ are the unique scalars such that $ x_j = \sum_{i=1}^r \beta_i x_j^i \in \mathbb{F}_{q^r} $, for $ j = 1,2, \ldots, r $. Observe that $ M_{\boldsymbol\beta} $ is an $ \mathbb{F}_q $-linear vector space isomorphism, and it is the identity map if $ r = 1 $ and $ \beta_1 = 1 $. 
	
\begin{definition} \label{def matrix-schur products}
Given $ \mathbf{x}, \mathbf{y} \in \mathbb{F}_{q^r}^r $, we define their \textit{matrix product} with respect to $ \boldsymbol\beta $ as
\begin{equation}
\mathbf{x} \star \mathbf{y} = M_{\boldsymbol\beta}^{-1}(M_{\boldsymbol\beta}(\mathbf{x}) M_{\boldsymbol\beta}(\mathbf{y})) \in \mathbb{F}_{q^r}^r .
\end{equation}
For $ N = gr $, $ \mathbf{x} = (\mathbf{x}_1, \mathbf{x}_2, \ldots, \mathbf{x}_g) \in \mathbb{F}_{q^r}^N $ and $ \mathbf{y} = (\mathbf{y}_1, $ $ \mathbf{y}_2, $ $ \ldots, $ $ \mathbf{y}_g) \in \mathbb{F}_{q^r}^N $, where $ \mathbf{x}_j, \mathbf{y}_j \in \mathbb{F}_{q^r}^r $, for $ j = 1,2, \ldots, g $, we define their \textit{coordinate-wise matrix product} as
\begin{equation}
\mathbf{x} * \mathbf{y} = ( \mathbf{x}_1 \star \mathbf{y}_1, \mathbf{x}_2 \star \mathbf{y}_2, \ldots, \mathbf{x}_g \star \mathbf{y}_g) \in \mathbb{F}_{q^r}^N ,
\label{eq def matrix-schur product}
\end{equation}
and we define their \textit{inner matrix product} $ \cdot $ as
\begin{equation}
\mathbf{x} \cdot \mathbf{y} = \sum_{j=1}^g \mathbf{x}_j \star \mathbf{y}_j \in \mathbb{F}_{q^r}^r . 
\label{eq def inner product}
\end{equation}
\end{definition}

The products $ \star $, $ * $ and $ \cdot $ all depend on the subfield $ \mathbb{F}_q \subseteq \mathbb{F}_{q^r} $ (thus $ q $ and $ r $) and the ordered basis $ \boldsymbol\beta $, but we will not denote this dependence for simplicity. The classical coordinate-wise and inner products in $ \mathbb{F}_q^N $, used in \cite{hollanti} for PIR (and in general in the literature, see \cite{oneshot}), are recovered by setting $ r = 1 $ and $ \beta_1 = 1 $ (thus $ N=g $). 

From the definitions, we note also that, if $ \mathbf{x} = (x_{1}, $ $ x_{2}, $ $ \ldots, $ $ x_{r}) \in \mathbb{F}_{q^r}^{r} $ and $ \mathbf{y} = \sum_{i=1}^r \beta_{i} \mathbf{y}^i \in \mathbb{F}_{q^r}^{r} $, with $ x_{i} \in \mathbb{F}_{q^r} $ and $ \mathbf{y}^i \in \mathbb{F}_q^{r} $, for $ i = 1,2, \ldots, r $, then
\begin{equation}
M_{\boldsymbol\beta}^{-1}(M_{\boldsymbol\beta}(\mathbf{x}) M_{\boldsymbol\beta}(\mathbf{y})) = \sum_{i=1}^r x_i \mathbf{y}^i \in \mathbb{F}_{q^r}^{r}.
\label{eq rewriting matrix-Schur product}
\end{equation}
From Equation (\ref{eq rewriting matrix-Schur product}) applied coordinate-wise, we deduce the following.

\begin{lemma} \label{lemma linearity matrix-schur product}
The coordinate-wise matrix product $ * $ is $ \mathbb{F}_q $-bilinear and $ \mathbb{F}_{q^r} $-linear in the first component, that is,
\begin{enumerate}
\item
$ (\mathbf{x} + \mathbf{x}^\prime) * \mathbf{y} = \mathbf{x}*\mathbf{y} + \mathbf{x}^\prime * \mathbf{y} $ and $ \mathbf{x}*(\mathbf{y} + \mathbf{y}^\prime) = \mathbf{x}*\mathbf{y} + \mathbf{x}*\mathbf{y}^\prime $, 
\item
$ (a \mathbf{x}) * \mathbf{y} = a (\mathbf{x} * \mathbf{y}) $ and $ \mathbf{x} * (b \mathbf{y}) = b (\mathbf{x} * \mathbf{y}) $,
\end{enumerate}
for all $ \mathbf{x}, \mathbf{x}^\prime, \mathbf{y}, \mathbf{y}^\prime \in \mathbb{F}_{q^r}^N $, all $ a \in \mathbb{F}_{q^r} $ and all $ b \in \mathbb{F}_q $.
\end{lemma}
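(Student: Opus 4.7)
The plan is to reduce everything to the single-block case. Since both $*$ and $\star$ are defined coordinate-wise on the $g$ blocks of length $r$, it suffices to prove the two bilinearity statements for $\star$ acting on $\mathbb{F}_{q^r}^r$; the claims for $*$ then follow by splitting any $\mathbf{x}\in\mathbb{F}_{q^r}^N$ and $\mathbf{y}\in\mathbb{F}_{q^r}^N$ into their $g$ blocks and applying the single-block identity block by block.

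For the single-block statements, I would not work directly with the matrix definition $\mathbf{x}\star\mathbf{y}=M_{\boldsymbol\beta}^{-1}(M_{\boldsymbol\beta}(\mathbf{x})M_{\boldsymbol\beta}(\mathbf{y}))$, but rather with the explicit formula (\ref{eq rewriting matrix-Schur product}), namely $\mathbf{x}\star\mathbf{y}=\sum_{i=1}^{r}x_i\mathbf{y}^i$ whenever $\mathbf{y}=\sum_{i=1}^{r}\beta_i\mathbf{y}^i$ with $\mathbf{y}^i\in\mathbb{F}_q^r$. Additivity in the first argument is immediate from this formula, since $(x_i+x'_i)\mathbf{y}^i=x_i\mathbf{y}^i+x'_i\mathbf{y}^i$. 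For additivity in the second argument, one writes $\mathbf{y}'=\sum_{i=1}^r \beta_i \mathbf{y}'^{i}$ and uses that the decomposition of $\mathbf{y}+\mathbf{y}'$ in the basis $\boldsymbol\beta$ has $\mathbb{F}_q$-coordinates $\mathbf{y}^i+\mathbf{y}'^{i}$; then $\sum_{i=1}^r x_i(\mathbf{y}^i+\mathbf{y}'^{i})=\mathbf{x}\star\mathbf{y}+\mathbf{x}\star\mathbf{y}'$.

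The two scalar-homogeneity statements split according to the ground field. For $a\in\mathbb{F}_{q^r}$, the formula (\ref{eq rewriting matrix-Schur product}) gives $(a\mathbf{x})\star\mathbf{y}=\sum_{i=1}^r (ax_i)\mathbf{y}^i = a\sum_{i=1}^r x_i\mathbf{y}^i = a(\mathbf{x}\star\mathbf{y})$, where the key point is that the sum on the right really is an $\mathbb{F}_{q^r}$-linear combination of the $\mathbb{F}_q$-vectors $\mathbf{y}^i$, so scaling by $a\in\mathbb{F}_{q^r}$ passes through the sum. For $b\in\mathbb{F}_q$, the scalar $b$ commutes with each $\beta_i$, so $b\mathbf{y}=\sum_{i=1}^r\beta_i(b\mathbf{y}^i)$ is still the decomposition of $b\mathbf{y}$ in the basis $\boldsymbol\beta$ with $\mathbb{F}_q$-vector coefficients $b\mathbf{y}^i\in\mathbb{F}_q^r$; applying (\ref{eq rewriting matrix-Schur product}) once more yields $\mathbf{x}\star(b\mathbf{y})=\sum_{i=1}^r x_i(b\mathbf{y}^i)=b(\mathbf{x}\star\mathbf{y})$.

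I do not anticipate a serious obstacle: the only subtle point is the asymmetry between the two sides, which is dictated by the fact that $M_{\boldsymbol\beta}$ is only $\mathbb{F}_q$-linear, not $\mathbb{F}_{q^r}$-linear. Accordingly, scaling the \emph{second} argument by an element of $\mathbb{F}_{q^r}\setminus\mathbb{F}_q$ need not commute with $M_{\boldsymbol\beta}$ and so is not expected to be homogeneous, whereas scaling the \emph{first} argument by $a\in\mathbb{F}_{q^r}$ works because (\ref{eq rewriting matrix-Schur product}) already expresses the product as an $\mathbb{F}_{q^r}$-linear combination in $\mathbf{x}$. This explains why the statement restricts $b$ to $\mathbb{F}_q$ but allows $a\in\mathbb{F}_{q^r}$, and it is exactly what the above calculation confirms.
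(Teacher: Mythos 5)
Your proof is correct and takes essentially the same approach as the paper: the paper simply remarks that the lemma follows ``from Equation~(\ref{eq rewriting matrix-Schur product}) applied coordinate-wise,'' and you carry out exactly that reduction to the single-block formula $\mathbf{x}\star\mathbf{y}=\sum_{i=1}^r x_i\mathbf{y}^i$, verifying each of the four properties directly. Your closing remark explaining the asymmetry between the two arguments (the decomposition of the second argument lives over $\mathbb{F}_q$, so only $\mathbb{F}_q$-scalars pass through) is a welcome addition not spelled out in the paper.
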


\subsection{Products of skew and linearized polynomials} \label{subsec products skew and lin pols}

We have the following important connection between the rings $ \mathbb{F}_{q^r}[x ; \sigma] $ and $ \mathbb{F}_{q^r}[\mathcal{D}_a] $, for all $ a \in \mathbb{F}_{q^r} $. We consider $ \mathbb{F}_{q^r}[\mathcal{D}_a] $ as a ring with conventional addition and with composition of maps as multiplication, denoted by $ \circ $.

\begin{lemma} \label{lemma connection two skew pol products}
For all $ F, G \in \mathbb{F}_{q^r}[x ; \sigma] $ and all $ a \in \mathbb{F}_{q^r} $, it holds that
$$ (FG)^{\mathcal{D}_a} = F^{\mathcal{D}_a} \circ G^{\mathcal{D}_a}. $$
In particular, the map $ \mathbb{F}_{q^r}[x;\sigma] \longrightarrow \mathbb{F}_{q^r}[\mathcal{D}_a] $ given by (\ref{eq def of F^D_a}) is a (surjective) ring morphism.
\end{lemma}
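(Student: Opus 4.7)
The plan is to reduce the identity to the case of monomials by bilinearity, verify it by a direct computation, and isolate the one nontrivial algebraic fact that makes it work, namely a cocycle-type identity for the iterated norms $N_i(a)$.

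First I would observe that the map $F\mapsto F^{\mathcal{D}_a}$ is $\mathbb{F}_{q^r}$-linear (on the left) and additive, and so is the composition $(F,G)\mapsto F^{\mathcal{D}_a}\circ G^{\mathcal{D}_a}$ bilinearly (again with scalars on the left). Hence it suffices to prove $(FG)^{\mathcal{D}_a}=F^{\mathcal{D}_a}\circ G^{\mathcal{D}_a}$ for monomials $F=\alpha x^i$ and $G=\beta x^j$ with $\alpha,\beta\in\mathbb{F}_{q^r}$ and $i,j\in\mathbb{N}$. Iterating the defining relation $x\beta=\sigma(\beta)x$ from (\ref{eq defining property of skew polynomial ring}) gives $x^i\beta=\sigma^i(\beta)x^i$, so $FG=\alpha\sigma^i(\beta)x^{i+j}$ in $\mathbb{F}_{q^r}[x;\sigma]$, and therefore by (\ref{eq def of F^D_a})
\[
(FG)^{\mathcal{D}_a}=\alpha\sigma^i(\beta)\,\mathcal{D}_a^{\,i+j}.
\]

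Next I would compute the right-hand side by evaluating on an arbitrary $\gamma\in\mathbb{F}_{q^r}$. Using (\ref{eq definition linear operator}) twice,
\[
\bigl((\alpha\mathcal{D}_a^i)\circ(\beta\mathcal{D}_a^j)\bigr)(\gamma)
=\alpha\,\sigma^i\!\bigl(\beta\sigma^j(\gamma)N_j(a)\bigr)\,N_i(a)
=\alpha\,\sigma^i(\beta)\,\sigma^{i+j}(\gamma)\,\sigma^i(N_j(a))N_i(a),
\]
where I have used that $\sigma^i$ is a field homomorphism. At this point the proof reduces to the identity
\[
\sigma^i\!\bigl(N_j(a)\bigr)\,N_i(a)=N_{i+j}(a),
\]
which follows immediately by telescoping the definition $N_k(a)=\sigma^{k-1}(a)\cdots\sigma(a)a$: applying $\sigma^i$ to $N_j(a)$ produces $\sigma^{i+j-1}(a)\cdots\sigma^i(a)$, and multiplying by $N_i(a)=\sigma^{i-1}(a)\cdots\sigma(a)a$ on the right fills in the missing factors. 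Plugging this back yields $\alpha\sigma^i(\beta)\sigma^{i+j}(\gamma)N_{i+j}(a)=\alpha\sigma^i(\beta)\mathcal{D}_a^{\,i+j}(\gamma)$, matching $(FG)^{\mathcal{D}_a}(\gamma)$ and closing the monomial case.

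The final assertion that $F\mapsto F^{\mathcal{D}_a}$ is a (surjective) ring morphism is then immediate: additivity is built into (\ref{eq def of F^D_a}), multiplicativity is what we just proved, and surjectivity holds because every element of $\mathbb{F}_{q^r}[\mathcal{D}_a]$ is by definition an $\mathbb{F}_{q^r}$-linear combination of powers of $\mathcal{D}_a$, each of which is the image of the corresponding $x^i$. The main obstacle—really the only substantive step—is bookkeeping the Frobenius twist in the composition, i.e.\ establishing the norm cocycle identity $\sigma^i(N_j(a))N_i(a)=N_{i+j}(a)$; it is this identity that translates the skew polynomial multiplication rule $x\beta=\sigma(\beta)x$ into the corresponding relation $\mathcal{D}_a\circ(\beta\,\cdot)=(\sigma(\beta)\,\cdot)\circ\mathcal{D}_a$ in the operator ring $\mathbb{F}_{q^r}[\mathcal{D}_a]$.
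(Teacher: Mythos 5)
Your proof is correct, but it takes a genuinely different route from the paper's. The paper argues at the level of generators and relations: it verifies the single operator identity $\mathcal{D}_a \circ (\beta\,\mathrm{Id}) = \sigma(\beta)\,\mathcal{D}_a$, notes that this is precisely the defining relation (\ref{eq defining property of skew polynomial ring}) of $\mathbb{F}_{q^r}[x;\sigma]$ under the substitution $x\mapsto\mathcal{D}_a$, $\beta\mapsto\beta\,\mathrm{Id}$, and concludes by the universal property of the skew polynomial ring that $F\mapsto F^{\mathcal{D}_a}$ is multiplicative. You instead reduce to monomials and expand both sides pointwise, which exposes the norm cocycle identity $\sigma^i(N_j(a))\,N_i(a)=N_{i+j}(a)$ as the underlying algebraic fact. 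The paper's argument is shorter and more conceptual, but it keeps the norm identity implicit (it is hidden inside the statement $\mathcal{D}_a^{i+1}=\mathcal{D}_a\circ\mathcal{D}_a^i$ from Definition \ref{def linearized operators}); your computation makes that identity explicit and the verification entirely elementary, at the cost of a bit more bookkeeping.

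One small inaccuracy that does not affect the result: you claim both $(F,G)\mapsto (FG)^{\mathcal{D}_a}$ and $(F,G)\mapsto F^{\mathcal{D}_a}\circ G^{\mathcal{D}_a}$ are $\mathbb{F}_{q^r}$-bilinear ``with scalars on the left.'' In the second argument they are only additive, not $\mathbb{F}_{q^r}$-linear: on the skew side $F(\alpha G) = (F\alpha)G \neq \alpha (FG)$ in general, and on the operator side $F^{\mathcal{D}_a}\circ(\alpha\, G^{\mathcal{D}_a}) \neq \alpha\,(F^{\mathcal{D}_a}\circ G^{\mathcal{D}_a})$ because $F^{\mathcal{D}_a}$ is merely $\mathbb{F}_q$-linear. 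Fortunately, additivity (i.e.\ distributivity) in each argument is all you actually use to reduce to monomials of the form $\alpha x^i$ and $\beta x^j$, so the reduction, and hence the proof, goes through unchanged; I would simply drop the claim of left-$\mathbb{F}_{q^r}$-linearity in $G$.
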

\begin{proof}
Observe that
\begin{equation}
\mathcal{D}_a \circ (\beta {\rm Id}) = \sigma(\beta) \mathcal{D}_a,
\label{eq defining property product in D_a}
\end{equation}
for all $ \beta \in \mathbb{F}_{q^r} $, where $ {\rm Id} = \mathcal{D}_a^0 $ is the multiplicative identity of $ \mathbb{F}_{q^r}[\mathcal{D}_a] $, and note that (\ref{eq defining property product in D_a}) coincides with (\ref{eq defining property of skew polynomial ring}) if we set $ x = \mathcal{D}_a $. Since (\ref{eq defining property of skew polynomial ring}) is the defining property of the product in the skew polynomial ring $ \mathbb{F}_{q^r}[x ; \sigma] $, the result follows.
\end{proof}

The main result of this section is showing that products of skew polynomials become coordinate-wise matrix products after evaluation via the operators $ \mathcal{D}_a $. 

\begin{theorem} \label{th connection poly and matrix-schur products}
Let $ \boldsymbol\beta = ( \beta_{1}, \beta_{2}, \ldots, \beta_{r} ) \in \mathbb{F}_{q^r}^r $ be an ordered basis of $ \mathbb{F}_{q^r} $ over $ \mathbb{F}_q $, and let coordinate-wise matrix products $ * $ be defined via $ \boldsymbol\beta $. Then it holds that
$$ (FG)^{\mathcal{D}_\mathbf{a}}(\boldsymbol\beta) = F^{\mathcal{D}_\mathbf{a}}(\boldsymbol\beta) * G^{\mathcal{D}_\mathbf{a}}(\boldsymbol\beta), $$
for all vectors $ \mathbf{a} = (a_1, a_2, \ldots, a_g) \in \mathbb{F}_{q^r}^g $ and all skew polynomials $ F,G \in \mathbb{F}_{q^r}[x ; \sigma] $.
\end{theorem}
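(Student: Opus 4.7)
The plan is to reduce the claim to the single-block case, apply Lemma \ref{lemma connection two skew pol products} to turn the skew-polynomial product into composition of operators, and then exploit the $ \mathbb{F}_q $-linearity of $ F^{\mathcal{D}_a} $ together with the explicit formula (\ref{eq rewriting matrix-Schur product}) for $ \star $.

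First, by the definition (\ref{eq notation for total vector polynomial ev}) of the total evaluation vector and by the blockwise definition (\ref{eq def matrix-schur product}) of the coordinate-wise matrix product $ * $, the claimed identity splits into $ g $ identical assertions, one for each coordinate $ a_j $. Hence it suffices to show, for an arbitrary $ a \in \mathbb{F}_{q^r} $, that
\[
(FG)^{\mathcal{D}_a}(\boldsymbol\beta) = F^{\mathcal{D}_a}(\boldsymbol\beta) \star G^{\mathcal{D}_a}(\boldsymbol\beta).
\]

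Next, I would apply Lemma \ref{lemma connection two skew pol products}, which yields $ (FG)^{\mathcal{D}_a} = F^{\mathcal{D}_a} \circ G^{\mathcal{D}_a} $, so the left-hand side becomes, by (\ref{eq notation for vector polynomial ev}),
\[
\bigl( F^{\mathcal{D}_a}(G^{\mathcal{D}_a}(\beta_1)), \, \ldots, \, F^{\mathcal{D}_a}(G^{\mathcal{D}_a}(\beta_r)) \bigr).
\]
The crucial observation is that $ F^{\mathcal{D}_a} : \mathbb{F}_{q^r} \to \mathbb{F}_{q^r} $ is $ \mathbb{F}_q $-linear: indeed, $ \sigma $ is the Fr\"obenius, hence $ \mathbb{F}_q $-linear, so each $ \mathcal{D}_a^i(\beta) = \sigma^i(\beta) N_i(a) $ is $ \mathbb{F}_q $-linear in $ \beta $, and $ F^{\mathcal{D}_a} = \sum_i F_i \mathcal{D}_a^i $ is therefore $ \mathbb{F}_q $-linear (multiplication by the scalars $ F_i \in \mathbb{F}_{q^r} $ is a fortiori $ \mathbb{F}_q $-linear).

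Now I would expand each evaluation of $ G^{\mathcal{D}_a} $ in the ordered basis $ \boldsymbol\beta $, writing
\[
G^{\mathcal{D}_a}(\beta_k) = \sum_{i=1}^{r} \beta_i \, y_k^i \quad \text{with} \quad y_k^i \in \mathbb{F}_q,
\]
for $ k = 1,2,\ldots,r $, and setting $ \mathbf{y}^i = (y_1^i, y_2^i, \ldots, y_r^i) \in \mathbb{F}_q^r $ so that $ \mathbf{y} := G^{\mathcal{D}_a}(\boldsymbol\beta) = \sum_{i=1}^r \beta_i \mathbf{y}^i $. By the $ \mathbb{F}_q $-linearity established above,
\[
F^{\mathcal{D}_a}(G^{\mathcal{D}_a}(\beta_k)) = \sum_{i=1}^{r} F^{\mathcal{D}_a}(\beta_i) \, y_k^i,
\]
which is precisely the $ k $-th coordinate of $ \sum_{i=1}^r F^{\mathcal{D}_a}(\beta_i) \mathbf{y}^i $. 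Finally, applying (\ref{eq rewriting matrix-Schur product}) to $ \mathbf{x} = F^{\mathcal{D}_a}(\boldsymbol\beta) $ and $ \mathbf{y} = G^{\mathcal{D}_a}(\boldsymbol\beta) $ identifies this last expression with $ F^{\mathcal{D}_a}(\boldsymbol\beta) \star G^{\mathcal{D}_a}(\boldsymbol\beta) $, closing the argument.

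The only non-routine point is recognising that the evaluation map associated with $ \mathcal{D}_a $ is $ \mathbb{F}_q $-linear (not $ \mathbb{F}_{q^r} $-linear), since this is exactly what matches the asymmetric bilinearity of $ \star $ recorded in Lemma \ref{lemma linearity matrix-schur product}; once this is in hand the proof is a short manipulation. A small sanity check in degree one, with $ F = G = x $, recovers $ \mathcal{D}_a^2(\beta_k) = \sigma^2(\beta_k)\sigma(a)a $ on both sides, confirming the alignment between the $ \sigma $-twist in the skew multiplication and the basis expansion used for $ \star $.
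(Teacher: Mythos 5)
Your proof is correct and takes essentially the same route as the paper: reduce to the single-block case $g=1$, invoke Lemma \ref{lemma connection two skew pol products} to turn $(FG)^{\mathcal{D}_a}$ into $F^{\mathcal{D}_a}\circ G^{\mathcal{D}_a}$, and then use the $\mathbb{F}_q$-linearity of the operator together with the basis expansion (\ref{eq rewriting matrix-Schur product}) to identify the composition with the $\star$-product. The only cosmetic difference is in organization: the paper builds the key intermediate identity $F^{\mathcal{D}_a}(\boldsymbol\beta)\star\mathbf{y}=F^{\mathcal{D}_a}(\mathbf{y})$ in stages (first for $\sigma^\ell$ via \cite[Prop.~1]{RECP}, then for $\mathcal{D}_a^\ell$, then for general $F$ by $\mathbb{F}_{q^r}$-linearity in the first slot), whereas you observe the $\mathbb{F}_q$-linearity of $F^{\mathcal{D}_a}$ directly and pass through the coordinates $y_k^i$ in one step — a slightly tighter packaging of the same idea.
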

\begin{proof}
In \cite[Prop. 1]{RECP}, it was proven that
\begin{equation}
\sigma^\ell(\boldsymbol\beta) \star \mathbf{y} = \sigma^\ell(\mathbf{y}),
\label{eq star product for g=1}
\end{equation}
for all $ \ell \in \mathbb{N} $ and all $ \mathbf{y} \in \mathbb{F}_{q^r}^{r} $. We recall the proof of (\ref{eq star product for g=1}) for convenience of the reader. If $ \mathbf{y} = \sum_{i=1}^r \beta_{i} \mathbf{y}^i $, where $ \mathbf{y}^i \in \mathbb{F}_q^{r} $, for $ i = 1,2, \ldots, r $, then
$$ \sigma^\ell(\boldsymbol\beta) \star \mathbf{y} = \sum_{i=1}^r \sigma^\ell(\beta_{i}) \mathbf{y}^i = \sigma^\ell \left( \sum_{i=1}^r \beta_{i} \mathbf{y}^i \right) = \sigma^\ell(\mathbf{y}), $$
where the first equality is (\ref{eq rewriting matrix-Schur product}).
 
Since $ \star $ is $ \mathbb{F}_{q^r} $-linear in the first component (Lemma \ref{lemma linearity matrix-schur product}), and $ \mathcal{D}_a^\ell = N_\ell(a) \sigma^\ell $, where $ N_\ell(a) \in \mathbb{F}_{q^r} $, then we deduce from (\ref{eq star product for g=1}) that
\begin{equation*}
\mathcal{D}_a^\ell(\boldsymbol\beta) \star \mathbf{y} = (N_\ell(a) \sigma^\ell(\boldsymbol\beta)) \star \mathbf{y} = N_\ell(a) (\sigma^\ell(\boldsymbol\beta) \star \mathbf{y}) = N_\ell(a) \sigma^\ell(\mathbf{y}) = \mathcal{D}_a^\ell(\mathbf{y}),
\end{equation*}
for all $ a \in \mathbb{F}_{q^r} $, all $ \mathbf{y} \in \mathbb{F}_{q^r}^{r} $ and all $ \ell \in \mathbb{N} $. Thus the case $ g=1 $ follows by combining Lemmas \ref{lemma linearity matrix-schur product} and \ref{lemma connection two skew pol products}. 

Finally, the theorem for general $ g $ follows by applying the case $ g = 1 $ separately in each of the $ g $ coordinates over the alphabet $ \mathbb{F}_{q^r}^r $, and applying Lemma \ref{lemma connection two skew pol products}.
\end{proof}

Setting $ r = 1 $ and $ \beta_1 = 1 $, the previous theorem is nothing but the well-known fact that coordinate-wise evaluation transforms conventional polynomial products into the conventional coordinate-wise product.  

We conclude by deducing that the product of two linearized Reed-Solomon codes over the same ordered basis $ \boldsymbol\beta $ is again a linearized Reed-Solomon code. For this purpose, given $ \mathbb{F}_{q^r} $-linear codes $ \mathcal{C}_1 , \mathcal{C}_2 \subseteq \mathbb{F}_{q^r}^N $, we define their coordinate-wise matrix product as
$$ \mathcal{C}_1 * \mathcal{C}_2 = \langle \{ \mathbf{c}_1 * \mathbf{c}_2 \mid \mathbf{c}_1 \in \mathcal{C}_1, \mathbf{c}_2 \in \mathcal{C}_2 \}  \rangle \subseteq \mathbb{F}_{q^r}^N , $$
where $ \langle \mathcal{A} \rangle $ denotes the $ \mathbb{F}_q $-linear vector space generated by $ \mathcal{A} \subseteq \mathbb{F}_{q^r}^N $.

\begin{corollary} \label{cor product of lin RS}
Let $ \boldsymbol\beta = ( \beta_{1}, \beta_{2}, \ldots, \beta_{r} ) \in \mathbb{F}_{q^r}^r $ be an ordered basis of $ \mathbb{F}_{q^r} $ over $ \mathbb{F}_q $, and let coordinate-wise matrix products $ * $ be defined via $ \boldsymbol\beta $. Let also $ \mathbf{a} = (\gamma^0, \gamma^1, \ldots, \gamma^{g-1}) \in \mathbb{F}_{q^r}^g $, for a primitive element $ \gamma \in \mathbb{F}_{q^r}^* $. For any $ k_1, k_2 = 0,1,2, \ldots, N $, with $ k_1 \geq 1 $, 
$$ \mathcal{C}_{N,k_1}(\mathbf{a}, \boldsymbol\beta) * \mathcal{C}_{N,k_2}(\mathbf{a}, \boldsymbol\beta) = \mathcal{C}_{N,k_1+k_2-1}(\mathbf{a}, \boldsymbol\beta) $$
if $ k_1 + k_2 -1 \leq N $, and $ \mathcal{C}_{N,k_1}(\mathbf{a}, \boldsymbol\beta) * \mathcal{C}_{N,k_2}(\mathbf{a}, \boldsymbol\beta) = \mathbb{F}_{q^r}^N $ otherwise.
\end{corollary}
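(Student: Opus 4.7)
The plan is to translate the coordinate-wise matrix product of codewords into ordinary (skew) multiplication of polynomials via Theorem \ref{th connection poly and matrix-schur products}, and then argue at the level of $\mathbb{F}_{q^r}[x;\sigma]$. Recall that $\mathbb{F}_{q^r}[x;\sigma]$ is an Ore domain (it has no zero divisors since $\sigma$ is a field automorphism), so $\deg(FG) = \deg F + \deg G$ for all nonzero $F,G$. This is the single arithmetic fact, together with Theorem \ref{th connection poly and matrix-schur products}, on which the entire argument rests.

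First I would prove the inclusion ``$\subseteq$'' when $k_1+k_2-1 \leq N$. Every generator of $\mathcal{C}_{N,k_1}(\mathbf{a},\boldsymbol\beta) * \mathcal{C}_{N,k_2}(\mathbf{a},\boldsymbol\beta)$ has the form $F^{\mathcal{D}_\mathbf{a}}(\boldsymbol\beta) * G^{\mathcal{D}_\mathbf{a}}(\boldsymbol\beta)$ with $\deg F < k_1$ and $\deg G < k_2$ (or with $F$ or $G$ zero, in which case the product is zero). By Theorem \ref{th connection poly and matrix-schur products} this equals $(FG)^{\mathcal{D}_\mathbf{a}}(\boldsymbol\beta)$, and $\deg(FG) = \deg F + \deg G \leq k_1+k_2-2 < k_1+k_2-1$. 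Hence each generator belongs to $\mathcal{C}_{N,k_1+k_2-1}(\mathbf{a},\boldsymbol\beta)$; since the target is an $\mathbb{F}_{q^r}$-linear (hence also $\mathbb{F}_q$-linear) subspace, so is the span.

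Second, for the opposite inclusion I would show that every monomial evaluation $(x^i)^{\mathcal{D}_\mathbf{a}}(\boldsymbol\beta)$, for $0 \leq i \leq k_1+k_2-2$, already appears in the generating set $\{\mathbf{c}_1 * \mathbf{c}_2 : \mathbf{c}_1 \in \mathcal{C}_{N,k_1}, \mathbf{c}_2 \in \mathcal{C}_{N,k_2}\}$. Set $j = \min(i,k_1-1)$; then $0 \leq j \leq k_1-1$ and $0 \leq i-j \leq k_2-1$ (the latter because $i \leq k_1+k_2-2$), so $x^i = x^j \cdot x^{i-j}$ with $\deg x^j < k_1$ and $\deg x^{i-j} < k_2$. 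Applying Theorem \ref{th connection poly and matrix-schur products} gives
\[
(x^i)^{\mathcal{D}_\mathbf{a}}(\boldsymbol\beta) = (x^j)^{\mathcal{D}_\mathbf{a}}(\boldsymbol\beta) * (x^{i-j})^{\mathcal{D}_\mathbf{a}}(\boldsymbol\beta),
\]
which sits inside the generating set. Since the vectors $\{(x^i)^{\mathcal{D}_\mathbf{a}}(\boldsymbol\beta) : 0 \leq i \leq k_1+k_2-2\}$ form a basis of $\mathcal{C}_{N,k_1+k_2-1}(\mathbf{a},\boldsymbol\beta)$, the inclusion ``$\supseteq$'' follows (and in fact works equally well with the $\mathbb{F}_q$-span or the $\mathbb{F}_{q^r}$-span, thanks to Lemma \ref{lemma linearity matrix-schur product} applied in the first component, which lets us absorb any scalar $a \in \mathbb{F}_{q^r}$ into $x^j$ without changing its degree).

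Finally, for the case $k_1+k_2-1 > N$, I would reduce to the saturated case $k_1'+k_2'-1 = N$. Since $k_1 \leq N$ and $k_1+k_2 \geq N+2$, the choice $k_1' = k_1$, $k_2' = N+1-k_1$ satisfies $1 \leq k_2' \leq k_2$, so $\mathcal{C}_{N,k_2'}(\mathbf{a},\boldsymbol\beta) \subseteq \mathcal{C}_{N,k_2}(\mathbf{a},\boldsymbol\beta)$. By the previous case,
\[
\mathcal{C}_{N,k_1'}(\mathbf{a},\boldsymbol\beta) * \mathcal{C}_{N,k_2'}(\mathbf{a},\boldsymbol\beta) = \mathcal{C}_{N,N}(\mathbf{a},\boldsymbol\beta) = \mathbb{F}_{q^r}^N,
\]
where the last equality holds because an $(N,N)$ linearized Reed-Solomon code has $\mathbb{F}_{q^r}$-dimension $N$. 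Sandwiching between this and $\mathbb{F}_{q^r}^N$ gives the claimed equality. I do not expect a serious obstacle here; the only delicate point is keeping track of whether the span is $\mathbb{F}_q$-linear or $\mathbb{F}_{q^r}$-linear, and Lemma \ref{lemma linearity matrix-schur product} together with the above factorization $x^i = x^j \cdot x^{i-j}$ (absorbing scalars into the first factor) resolves that uniformly.
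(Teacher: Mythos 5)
Your proof is correct and follows exactly the route the paper intends: the paper's own proof is a one-line citation of Lemma \ref{lemma linearity matrix-schur product}, Theorem \ref{th connection poly and matrix-schur products}, and the degree formula $\deg(FG)=\deg F+\deg G$, and your argument is a faithful, detailed unpacking of how those three ingredients combine (the $\subseteq$ inclusion via degree additivity, the $\supseteq$ inclusion via factoring monomials as $x^j\cdot x^{i-j}$, and the saturated case by shrinking $k_2$). The observation that $\mathbb{F}_{q^r}$-linearity of $\star$ in the first argument lets you absorb scalars into $x^j$, making the $\mathbb{F}_q$-span and $\mathbb{F}_{q^r}$-span coincide, is a genuinely useful clarification that the paper leaves implicit.
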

\begin{proof}
It follows by combining Lemma \ref{lemma linearity matrix-schur product}, Theorem \ref{th connection poly and matrix-schur products} and the fact that
$$ \deg(FG) = \deg(F) + \deg(G), $$
for all skew polynomials $ F,G \in \mathbb{F}_{q^r}[x ; \sigma] $.
\end{proof}

Setting $ r = 1 $ and $ \beta_1 = 1 $, Corollary \ref{cor product of lin RS} recovers the well-known fact that the classical coordinate-wise product of two Reed-Solomon codes is again a Reed-Solomon code. See for instance \cite[Prop. 3]{hollanti}. Setting $ g = 1 $, Corollary \ref{cor product of lin RS} recovers the fact that the matrix product of two Gabidulin codes is again a Gabidulin code. See for instance \cite[Lemma 10]{RECP} or \cite{pir-networks}.

\section{PIR schemes for LRS-based MR-LRC databases} \label{sec main PIR scheme}

In this section, we provide a concrete and explicit PIR scheme, as in Definition \ref{def PIR}, for the MR-LRC storage codes from Construction \ref{construction 1}. To that end, we will show how to construct the queries and how to reconstruct the file from the responses. The set of server responses, given the queries, are as described in Definition \ref{def PIR}.

Let the notation be as in Section \ref{sec PIR}, fix an ordered basis $ \boldsymbol\beta = (\beta_1, \beta_2, \ldots, \beta_r) \in \mathbb{F}_{q^r}^r $ of $ \mathbb{F}_{q^r} $ over $ \mathbb{F}_q $, and let coordinate-wise matrix products $ * $ be defined via $ \boldsymbol\beta $. We set throughout this section $ t \geq 1 $ as the target number of colluding servers, with the restriction 
\begin{equation}
k + rt \leq N,
\label{eq restriction on k , t , N}
\end{equation}
and we will set $ c = N - k - rt + 1 > 0 $. 

For clarity, we present two schemes, being the first one (Subsection \ref{subsec first scheme}) a particular case of the second one (Subsection \ref{subsec second scheme}) by setting $ b = 1 $. The first scheme is added because it is a particular case that is simpler, it is easier to understand and does not require folding by setting $ b > 1 $. However, it requires that $ k $ is divisible by $ c $. The second scheme is added simply because it is an extension of the first scheme that works for any set of parameters. 

Both schemes will achieve the PIR rate
\begin{equation}
R = \frac{c}{N} = \frac{N - k - rt + 1}{N},
\label{eq our PIR rate}
\end{equation}
coinciding with the rate of the PIR scheme in \cite{hollanti} for $ (N,k) $ GRS storage codes. However, as explained in Section \ref{sec intro}, one cannot compare these two PIR schemes, since \cite{hollanti} only works for GRS codes, which cannot be the MDS codes obtained from puncturing MR-LRCs (since GRS have linear field sizes and MR-LRCs require super-linear field sizes \cite{gopi}). Moreover, the PIR scheme from \cite{hollanti} only works for GRS codes since it uses coordinate-wise products, and GRS codes are the only MDS codes that satisfy the desired properties with respect to such products \cite{marquez}. We circumvent this limitation by considering coordinate-wise matrix products and LRS codes.

We also remark here that, mathematically speaking, the scheme in \cite{hollanti} is precisely the particular case of our second scheme by setting $ r = \delta = 1 $. Being able to extend it to arbitrary $ r $ and $ \delta $ requires a somewhat different partition of the considered vectors and matrices, and the careful use of the coordinate-wise matrix products from Section \ref{sec coordinate matrix product}. On the Coding-Theoretic side, the obvious and important difference with \cite{hollanti} is that setting $ r = \delta = 1 $ simply does not allow for local repair.

\subsection{First scheme: No folding} \label{subsec first scheme}

Our first scheme assumes no minimum folding of the files, that is, $ b = 1 $, and is precisely the particular case of our second scheme obtained by setting $ b = 1 $. Note that the stored codewords may however be further folded $ b^\prime \gg 1 $ times without folding the PIR scheme. The main disadvantage of choosing $ b=1 $ is that the dimension $ k $ must be divisible by $ c = N - k - rt + 1 $. Relaxing this divisibility assumption is the advantage of the second scheme. 

As stated above, assume that $ k = sc $, for some $ s \in \mathbb{N} $, which will be the number of iterations of the scheme. This can be trivially assumed if $ k + rt = N $, thus $ c=1 $ and $ s=k $. Note that this is the best choice for the competing parameters $ k $ and $ t $ satisfying (\ref{eq restriction on k , t , N}), but it gives the smallest PIR rate $ R = 1/N \leq c/N $ among our schemes, although $ R = 1/N $ is still far better than downloading the whole database (which gives rate $ 1/m $), since $ m \gg N $ in practice.

Fix file and iteration indices $ i = 1,2, \ldots, m $ and $ u = 1, $ $ 2, $ $ \ldots, $ $ s $, respectively. We now describe the two steps of the $ u $th iteration in Definition \ref{def PIR} to privately retrieve the $ i $th file. 

\textbf{Step 1, Queries:} Choose $ m $ codewords $ \mathbf{d}^{\ell} = (\mathbf{d}^{\ell}_1, \mathbf{d}^\ell_2, $ $ \ldots, $ $ \mathbf{d}^{\ell}_g) \in \mathbb{F}_{q^r}^{N} $ uniformly at random from $ \mathcal{C}_{N,rt}(\mathbf{a},\boldsymbol\beta) $, where $ \mathbf{d}^{\ell}_j \in \mathbb{F}_{q^r}^r $, for $ \ell = 1,2, \ldots, m $ and $ j = 1,2, \ldots, g $. The random vectors $ \mathbf{d}^{\ell} = \mathbf{d}^\ell(u) $ depend on the iteration index $ u $ (i.e. $ \mathbf{d}^\ell(1), \mathbf{d}^\ell(2), \ldots, \mathbf{d}^\ell(s) $ are identically distributed and independent), but we sometimes omit the index $ u $ for simplicity. Set
$$ \mathbf{d}_j = (\mathbf{d}_j^1, \mathbf{d}_j^2, \ldots, \mathbf{d}_j^m) \in \mathbb{F}_{q^r}^{r m}, $$
for $ j = 1,2, \ldots, g $. Define the set
$$ J_u = c(u-1) + [c], $$
where we use the notation $ a+B = \{ a+b \mid b \in B \} $, for $ a \in \mathbb{Z} $ and $ B \subseteq \mathbb{Z} $. Finally, for each server $ j = 1,2, \ldots, g $, we define its query by
\begin{equation}
\mathbf{q}_j^i(u) = \mathbf{d}_j(u) + \mathbf{e}_j^i(u) \in \mathbb{F}_{q^r}^{r m},
\label{eq def queries}
\end{equation}
where we define $ \mathbf{e}_j^i(u) \in \mathbb{F}_{q^r}^{rm} $ as being zero everywhere except in the $ i $th block of $ r $ coordinates over $ \mathbb{F}_{q^r} $, where it is defined as
\begin{equation}
M_{\boldsymbol\beta}^{-1}(I_{((j-1)r + [r]) \cap J_u}) \in \mathbb{F}_{q^r}^r .
\label{eq matrix I truncated}
\end{equation}
Here, we define $ I_J \in \mathbb{F}_q^{r \times r} $, for a set $ J \subseteq (j-1)r + [r] $, as the diagonal matrix $ I_J = \diag(\delta_1^J, \delta_2^J, \ldots, \delta_r^J) $, where $ \delta_\kappa^J = 1 $ if $ (j-1)r + \kappa \in J $, and $ \delta_\kappa^J = 0 $ otherwise. Note that $ I_\varnothing \in \mathbb{F}_q^{r \times r} $ is the zero matrix.

\textbf{Step 2, Responses:} Due to the definition of the queries in (\ref{eq def queries}) and the inner matrix product (\ref{eq def inner product}), the total response in the $ u $th iteration is 
$$ \mathbf{r}^i(u) = \sum_{\ell=1}^m (\mathbf{z}^\ell * \mathbf{d}^\ell(u)) + (\mathbf{0}_{(u-1)c} , \mathbf{z}^i_{J_u}, \mathbf{0}_{N - uc}) \in \mathbb{F}_{q^r}^N , $$
where $ \mathbf{0}_M \in \mathbb{F}_{q^r}^M $ is a zero vector of length $ M $. This is because, by the definition of the inner matrix product $ \cdot $ in $ \mathbb{F}_{q^r}^{rm} $, the response by the $ j $th server is given by 
\begin{equation*}
\begin{split}
\mathbf{r}^i_j(u) = \mathbf{z}_j \cdot \mathbf{q}^i_j(u) = \sum_{\ell = 1}^m \mathbf{z}^\ell_j \star \mathbf{q}^{i, \ell}_j(u) & = \sum_{\ell = 1}^m \left( \mathbf{z}^\ell_j \star \mathbf{d}^\ell_j(u) + \mathbf{z}^\ell_j \star \mathbf{e}^{i, \ell}_j(u) \right) \\
& = \sum_{\ell = 1}^m ( \mathbf{z}^\ell_j \star \mathbf{d}^\ell_j(u) ) + \sum_{\ell = 1}^m \mathbf{z}^\ell_j \star \mathbf{e}^{i, \ell}_j(u) \\
& = \sum_{\ell = 1}^m ( \mathbf{z}^\ell_j \star \mathbf{d}^\ell_j(u) ) + \mathbf{z}^i_j I_{((j-1)r + [r]) \cap J_u},
\end{split}
\end{equation*}
where $ \mathbf{e}^{i, \ell}_j(u) \in \mathbb{F}_{q^r}^r $ is the $ \ell $th block of $ r $ coordinates of the vector $ \mathbf{e}^i_j(u) \in \mathbb{F}_{q^r}^{rm} $, and finally,
\begin{equation*}
\begin{split}
\left( \mathbf{z}^i_j I_{((j-1)r + [r]) \cap J_u} \right)_{j = 1}^g & = \left( \mathbf{z}^i_1 I_{[r] \cap J_u}, \ldots, \mathbf{z}^i_g I_{((g-1)r + [r]) \cap J_u} \right) \\
& = (\mathbf{0}_{(u-1)c} , \mathbf{z}^i_{J_u}, \mathbf{0}_{N - uc}) \in \mathbb{F}_{q^r}^N.
\end{split}
\end{equation*}
In our view, this is the key step where we use that the $ j $th server (i.e., $ j $th corruptable unit) is the $ j $th local group after removing the local redundancies. This allows the corresponding stored data $ \mathbf{z}^\ell_j $ and query $ \mathbf{q}^\ell_j $ to be seen as $ r \times r $ matrices over $ \mathbb{F}_q $, and thus we may perform the above operations. 

\textbf{Step 3, File reconstruction:} We now describe how to recover the $ i $th file by combining the responses from all $ s $ iterations. Let $ H \in \mathbb{F}_{q^r}^{c \times N} $ be a parity-check matrix (any of them) of the linear code
$$ \mathcal{C}_{N,k + rt - 1}(\mathbf{a},\boldsymbol\beta) = \mathcal{C}_{N, k}(\mathbf{a},\boldsymbol\beta) * \mathcal{C}_{N, rt}(\mathbf{a},\boldsymbol\beta) $$ 
(recall Corollary \ref{cor product of lin RS}). For $ u = 1, $ $2, $ $ \ldots, s $, we compute
$$ \mathbf{r}^i(u) H^T = (\mathbf{0}_{(u-1)c} , \mathbf{z}^i_{J_u}, \mathbf{0}_{N - uc}) H^T, $$
which holds since
$$ \sum_{\ell=1}^m (\mathbf{z}^\ell * \mathbf{d}^\ell(u)) \in \mathcal{C}_{N,k + rt - 1}(\mathbf{a},\boldsymbol\beta). $$
Since $ \mathcal{C}_{N,k + rt - 1}(\mathbf{a},\boldsymbol\beta) $ is MDS by Theorem \ref{th constr 1 is MR}, its dual is also MDS, and we can recover the vector $ \mathbf{z}^i_{J_u} \in \mathbb{F}_{q^r}^c $ from $ \mathbf{r}^i(u) H^T $. Since we have that
$$ [k] = J_1 \cup J_2 \cup \ldots \cup J_s, $$
collecting all such $ s $ restrictions $ \mathbf{z}^i_{J_u} $, we obtain
$$ (z^i_1, z^i_2, \ldots, z^i_k) = \mathbf{x}^i (G_{out})_{[k]} \in \mathbb{F}_{q^r}^k. $$
Now, since $ \mathcal{C}_{N,k}(\mathbf{a},\boldsymbol\beta) $ is MDS, again  by Theorem \ref{th constr 1 is MR}, we may recover the $ i $th file, $ \mathbf{x}^i \in \mathbb{F}_{q^r}^k $, and we are done. 

Note that the MDS property in this last step is not necessary: If we take the generator matrix $ G_{out} $ of the outer code $ \mathcal{C}_{N,k}(\mathbf{a},\boldsymbol\beta) $ to be systematic, with the identity in the first $ k $ columns, then it simply holds that $ \mathbf{x}^i = (z^i_1, z^i_2, \ldots, z^i_k) $.

\textbf{Proof of privacy:} We now show that the proposed PIR scheme protects against any $ t $ colluding servers as in Definition \ref{def PIR privacy}. Recall from Section \ref{sec PIR} that we identify servers with local groups. Let $ T \subseteq [g] $, such that $ |T| = t $, be the set of colluding local groups. Therefore, this can be understood as an adversary gaining as information the values $ \mathbf{q}^i_j(u) \in \mathbb{F}_{q^r}^{rm} $, for $ j \in T $, and for all iterations $ u = 1,2, \ldots, s $. We will just write $ \mathbf{q}^i_j = \mathbf{q}^i_j(u) $ for simplicity. We need to prove that, for a given iteration, it holds that
$$ I((\mathbf{q}^i_j)_{j \in T}; i) = 0. $$
Since $ \mathcal{C}_{N, rt}(\mathbf{a}, \boldsymbol\beta) \subseteq \mathbb{F}_{q^r}^N $ has dimension $ rt $ and is MDS by Theorem \ref{th constr 1 is MR}, it holds that any set of $ rt $ coordinates in $ [N] $ constitute an information set for $ \mathcal{C}_{N, rt}(\mathbf{a}, \boldsymbol\beta) $. In other words, the restricted code $ \mathcal{C}_{N, rt}(\mathbf{a}, \boldsymbol\beta)_{\widetilde{T}} = \mathbb{F}_{q^r}^{rt} $ is the whole space, where $ \widetilde{T} = \bigcup_{j \in T} ((j-1)r + [r]) \subseteq [N] $ is the actual set of colluding nodes. This implies that the vectors
$$ (\mathbf{d}^{\ell}_j)_{j \in T} \in \mathbb{F}_{q^r}^{rt} $$
are uniform random variables in $ \mathbb{F}_{q^r}^{rt} $. Since the Cartesian product of independent and uniform random variables is again a uniform random variable, we deduce that
$$ (\mathbf{d}_j)_{j \in T} \in \mathbb{F}_{q^r}^{rtm} $$
is a uniform random variable in $ \mathbb{F}_{q^r}^{rtm} $. Since the vector of queries $ (\mathbf{q}_j^i)_{j \in T} $ is a translation of the random variable $ (\mathbf{d}_j)_{j \in T} $ by a deterministic vector, we deduce that $ (\mathbf{q}_j^i)_{j \in T} $ is a uniform random variable in $ \mathbb{F}_{q^r}^{rtm} $. Since there is only one uniform random variable in $ \mathbb{F}_{q^r}^{rtm} $, independently of $ i $, we deduce that $ I((\mathbf{q}^i_j)_{j \in T}; i) = 0 $, and we are done.

\subsection{Second scheme: Folding} \label{subsec second scheme}

In our second scheme, we avoid the constraint that $ k $ must be divisible by $ c = N - k - rt + 1 $. To that end, we will make use of the folding parameter $ b $ as done in \cite{hollanti}. Again, the stored codewords may be further folded $ b^\prime \gg 1 $ times without further folding the PIR scheme. We emphasize here that the scheme in \cite{hollanti} is actually recovered from this second scheme by setting $ r = \delta = 1 $, which is the case in which linearized Reed-Solomon codes recover Reed-Solomon codes (see Subsection \ref{subsec lin RS codes}). Our first scheme is also recovered from this second scheme by setting $ b = 1 $. To avoid the divisibility assumption, we define
$$ b = \frac{{\rm lcm}(c, k )}{k} \quad \textrm{and} \quad s = \frac{{\rm lcm}(c, k)}{c}, $$
hence guaranteeing that $ b k = s c $. Thus we may define
$$ h = \frac{k}{s} = \frac{c}{b}. $$
Fix file and iteration indices $ i = 1,2, \ldots, m $ and $ u = 1, $ $ 2, $ $ \ldots, $ $ s $, respectively. We now describe the two steps of $ u $th iteration in Definition \ref{def PIR} to privately retrieve the $ i $th file. 

\textbf{Step 1, Queries:} Choose $ mb $ codewords $ \mathbf{d}^{\ell, v} = (\mathbf{d}^{\ell,v}_1, $ $ \mathbf{d}^{\ell,v}_2, $ $ \ldots, $ $ \mathbf{d}^{\ell, v}_g) \in \mathbb{F}_{q^r}^{N} $, uniformly at random from $ \mathcal{C}_{N,rt}(\mathbf{a},\boldsymbol\beta) $, where $ \mathbf{d}^{\ell,v}_j \in \mathbb{F}_{q^r}^r $, for $ \ell = 1,2, \ldots, m $, $ v = 1, $ $ 2, $ $ \ldots, b $, and $ j = 1,2, \ldots, g $. As before, $ \mathbf{d}^{\ell, v} = \mathbf{d}^{\ell, v}(u) $ depends on $ u $, but we sometimes drop this in the notation. We set
$$ \mathbf{d}_j^\ell = (\mathbf{d}^{\ell,1}_j, \mathbf{d}^{\ell,2}_j, \ldots, \mathbf{d}^{\ell,b}_j) \in \mathbb{F}_{q^r}^{r b} \textrm{ and} $$
$$ \mathbf{d}_j = (\mathbf{d}_j^1, \mathbf{d}_j^2, \ldots, \mathbf{d}_j^m) \in \mathbb{F}_{q^r}^{r bm}, $$
for $ \ell = 1,2, \ldots, m $ and $ j = 1,2, \ldots, g $. Define the sets
$$ J_u^1 = h(u-1) + [h], J_u^2 = h + J_u^1, \ldots, J_u^b = h(b-1) + J_u^1. $$
Finally, for each server $ j = 1,2, \ldots, g $, we define its query by
\begin{equation}
\mathbf{q}_j^i(u) = \mathbf{d}_j(u) + \mathbf{e}_j^i(u) \in \mathbb{F}_{q^r}^{r bm}.
\label{def general query for second scheme}
\end{equation}
In this case, we define $ \mathbf{e}_j^i(u) \in \mathbb{F}_{q^r}^{rbm} $ as being zero everywhere except in the $ (b(i-1) + v) $th block of $ r $ coordinates over $ \mathbb{F}_{q^r} $, for $ v = 1,2, \ldots, b $, where it is defined as
\begin{equation}
M_{\boldsymbol\beta}^{-1}(I_{((j-1)r + [r]) \cap J_u^v}) \in \mathbb{F}_{q^r}^r .
\label{eq matrix I truncated in subset}
\end{equation}
As before, we define $ I_J \in \mathbb{F}_q^{r \times r} $, for a set $ J \subseteq (j-1)r + [r] $, as the diagonal matrix $ I_J = \diag(\delta_1^J, \delta_2^J, \ldots, \delta_r^J) $, where $ \delta_\kappa^J = 1 $ if $ (j-1)r + \kappa \in J $, and $ \delta_\kappa^J = 0 $ otherwise. As before, $ I_\varnothing \in \mathbb{F}_q^{r \times r} $ is the zero matrix.

\textbf{Step 2, Responses:} The reader can check that, from the definition of the queries in (\ref{def general query for second scheme}) and the inner matrix product (\ref{eq def inner product}), the total response in the first iteration is 
\begin{equation}
\mathbf{r}^i = \sum_{\ell=1}^m \sum_{v=1}^b (\mathbf{z}^{\ell, v} * \mathbf{d}^{\ell,v}) + (\mathbf{z}_{J_1^1}^{i,1}, \mathbf{z}_{J_1^2}^{i,2} , \ldots, \mathbf{z}_{J_1^b}^{i,b}, \mathbf{0}) \in \mathbb{F}_{q^r}^N ,
\label{eq step 2 second scheme}
\end{equation}
where $ \mathbf{0} $ has length $ N-c $. In the $ u $th iteration, the response is obtained similarly, replacing $ \mathbf{z}_{J_1^v}^{i,v} $ by $ \mathbf{z}_{J_u^v}^{i,v} $, but placed in the coordinates indexed by $ J_u^v $ taking the cyclicity of the coordinates in $ [N] $ into account, for $ v = 1,2, \ldots, b $.

\textbf{Step 3, File reconstruction:} We now describe how to recover the $ i $th file by combining the responses from all $ s $ iterations. As before, let $ H \in \mathbb{F}_{q^r}^{c \times N} $ be a parity-check matrix of 
$$ \mathcal{C}_{N,k + rt - 1}(\mathbf{a},\boldsymbol\beta) = \mathcal{C}_{N, k}(\mathbf{a},\boldsymbol\beta) * \mathcal{C}_{N, rt}(\mathbf{a},\boldsymbol\beta) $$ 
(recall Corollary \ref{cor product of lin RS}). In the first iteration, we compute
$$ \mathbf{r}^i H^T = (\mathbf{z}_{J_1^1}^{i,1}, \mathbf{z}_{J_1^2}^{i,2} , \ldots, \mathbf{z}_{J_1^b}^{i,b}, \mathbf{0}) H^T, $$
which holds since
$$ \sum_{\ell=1}^m \sum_{v=1}^b (\mathbf{z}^{\ell, v} * \mathbf{d}^{\ell,v}) \in \mathcal{C}_{N,k + rt - 1}(\mathbf{a},\boldsymbol\beta). $$
As before, $ \mathcal{C}_{N,k + rt - 1}(\mathbf{a},\boldsymbol\beta) $ is MDS by Theorem \ref{th constr 1 is MR}, and thus its dual is also MDS. Therefore we may recover $ \mathbf{z}_{J_1^1}^{i,1}, $ $ \mathbf{z}_{J_1^2}^{i,2} , $ $ \ldots, \mathbf{z}_{J_1^b}^{i,b} \in \mathbb{F}_{q^r}^h $ from $ \mathbf{r}^i H^T $. 

In a similar way, in the $ u $th iteration, we recover $ \mathbf{z}_{J_u^1}^{i,1}, $ $ \mathbf{z}_{J_u^2}^{i,2} , $ $ \ldots, \mathbf{z}_{J_u^b}^{i,b} \in \mathbb{F}_{q^r}^h $, for $ u = 1,2, \ldots, s $. For a given $ v = 1, $ $ 2, $ $ \ldots, $ $ b $, the reader can check from their definition that the sets $ J_1^v, J_2^v, \ldots, J_s^v $ are disjoint and the size of their union is $ sh = k $. Therefore, we recover $ k $ symbols of $ \mathbf{z}^{i,v} \in \mathbb{F}_{q^r}^N $ over the alphabet $ \mathbb{F}_{q^r} $, together with their indices, given by $ J_1^v \cup J_2^v \cup \ldots \cup J_s^v \subseteq [N] $. Since $ \mathcal{C}_{N,k}(\mathbf{a},\boldsymbol\beta) $ is MDS, we recover the $ v $th row of the $ i $th file, that is, $ \mathbf{x}^{i,v} \in \mathbb{F}_{q^r}^k $, for $ v = 1,2, \ldots, b $. Thus we are done by collecting all $ b $ rows, $ \mathbf{x}^{i,1}, \mathbf{x}^{i,2}, \ldots, \mathbf{x}^{i,b} $, of the $ i $th file.

\textbf{Proof of privacy:} Analogous to that in Subsection \ref{subsec first scheme}.

\subsection{Summary of parameters and complexity} \label{subsec summary}

We now discuss the complexity of the three steps of the PIR scheme in this section. We only discuss the scheme without folding, since all complexities simply get multiplied by $ b $ in the folded case. We consider the three main steps of the scheme:
\begin{enumerate}
\item
Queries: We need to generate $ sm $ uniformly random vectors in $ \mathbb{F}_{q^r}^{rt} $ and multiply each of them by a generator matrix of $ \mathcal{C}_{N, rt}(\mathbf{a}, \boldsymbol\beta) $, that is, a matrix of size $ rt \times N $. Thus this step has a complexity of $ \mathcal{O}(smrtN) = \mathcal{O}(smN^2) $ operations in $ \mathbb{F}_{q^r} $.
\item
Responses: We need to perform $ sm $ products of two matrices in $ \mathbb{F}_q^{r \times r} $ in order to compute the vectors $ \mathbf{r}^i_j = \mathbf{z}_j \cdot \mathbf{q}^i_j $, hence this step has a complexity of $ \mathcal{O}(smr^3) $ operations in $ \mathbb{F}_q $.
\item
Reconstruction: We need to compute $ s $ products of a vector in $ \mathbb{F}_{q^r}^N $ with a matrix in $ \mathbb{F}_{q^r}^{N \times c} $ in order to compute $ \mathbf{r}^i(u)H^T $, hence this step has a complexity of $ \mathcal{O}(scN) $ operations over $ \mathbb{F}_{q^r} $. Finally, computing $ \mathbf{x}^i (G_{out})_{[k]} $ is trivial if $ G_{out} $ is systematic (i.e. $ (G_{out})_{[k]} $ is the identity).
\end{enumerate}

As we can see, if the number of files $ m $ is much larger than the other parameters, then we may consider the total computational complexity as linear in $ m $ (recall that $ m \gg N $ and $ s,r,t \leq N $).

We next summarize the general PIR scheme (Subsection \ref{subsec second scheme}) in the following theorem by describing its parameters and computational complexity.

\begin{theorem} \label{th summary}
There exists a $ k $-dimensional MR-LRC $ \mathcal{C} \subseteq \mathbb{F}_{q^r}^n $ with $ (r,\delta) $ localities as in Definitions \ref{def LRC} and \ref{def MR} and, for a database coded with $ \mathcal{C} $, there exists a PIR scheme as in Definition \ref{def PIR}, with PIR rate
$$ R = \frac{N-k-rt+1}{N}, $$
where $ N = gr $ and the other parameters are as follows:
\begin{figure*}[!h]
\centering
\begin{tabular}{cc|cc}
\hline
Parameter & Restrictions & Parameter & Restrictions \\
\hline \hline
 $ r, \delta, g $ & None & Field size $ q $ & $ q > \max \{ r+\delta-3, g \} $ \\
\hline 
 No. files $ m $ & None & No. servers $ n $ & $ n = g(r+\delta-1) $ \\
\hline
 Dimension $ k $ & $ 1 \leq k \leq N $ & Colluding servers $ t $ & $ k + rt \leq N $ \\
\hline
 Iterations $ s $ & $ s = \frac{{\rm lcm}(k, N - k + rt + 1)}{N - k + rt + 1} $ & Folding $ b $ & $ b = \frac{{\rm lcm}(k, N - k + rt + 1)}{k} $ \\
\hline
\end{tabular}
\label{fig summary}
\end{figure*}
\\In addition, such a PIR scheme has a complexity of $ \mathcal{O}(smN^2) $ operations in $ \mathbb{F}_{q^r} $ for the queries, $ \mathcal{O}(smr^3) $ operations in $ \mathbb{F}_q $ for the responses, and $ \mathcal{O}(scN) $ operations over $ \mathbb{F}_{q^r} $ for the file reconstruction. If $ m $ grows while all other parameters remain constant, such complexities are linear in $ m $.
\end{theorem}

\subsection{Worked example} \label{subsec worked example}

In this subsection, we provide an example of the PIR scheme proposed in this manuscript. We will consider the simpler case of Subsection \ref{subsec first scheme} (no folding or $ b=1 $). We will consider dimension $ k = 2 $, locality $ r = 2 $, local distance $ \delta = 2 $, number of local groups $ g = 2 $ and protection against $ t = 1 $ colluding servers. The total number of nodes is $ n = g(r+\delta-1) = 6 $, but after removing $ \delta-1 = 1 $ redundant node per local groups, the number of remaining nodes is $ N = gr = 4 $. In order to consider Construction \ref{construction 1}, we choose the field size $ q = 3 > \max \{ g, r+\delta-3 \} $. Hence $ q^r = 9 $. Choosing $ s = 2 $ iterations and $ c = N -k-rt + 1 = 1 $, we notice that the hypotheses of Subsection \ref{subsec first scheme} are satisfied ($ k = sc $ and $ k+rt = N $). We keep the number of files $ m $ unrestricted (its value is not important for the example).

By considering the local codes in Construction \ref{construction 1} to be systematic, we may assume that, after removing the local redundancies, the remaining data is encoded with $ \mathcal{C}_{out} $, that is, $ \mathcal{C}_\Delta = \mathcal{C}_{out} $ as explained right after (\ref{eq z's if A systematic}). Let $ \alpha \in \mathbb{F}_9 $ be such that $ \alpha^2 = 2\alpha+1 $, which is a primitive element of $ \mathbb{F}_9 $. Let now $ a_1 = 1 $, $ a_2 = \alpha $, $ \beta_1 = 1 $ and $ \beta_2 = \alpha $. Notice that $ \beta_1 $ and $ \beta_2 $ are $ \mathbb{F}_3 $-linearly independent. Then we have
$$ G_{out} = \left( \begin{array}{cc|cc}
\beta_1 & \beta_2 & \beta_1 & \beta_2 \\
a_1 \beta_1^q & a_1 \beta_2^q & a_2 \beta_1^q & a_2 \beta_2^q
\end{array} \right) = \left( \begin{array}{cc|cc}
1 & \alpha & 1 & \alpha \\
1 & 2\alpha + 2 & \alpha & 2
\end{array} \right) \in \mathbb{F}_9^{2 \times 4} . $$
Thus if $ \mathbf{x}^i = (x^i_1, x^i_2) \in \mathbb{F}_{q^r}^k = \mathbb{F}_9^2 $ is the $ i $th file, we may consider its encoding as
$$ (z^1_1,z^1_2,z^1_3,z^1_4) = (\mathbf{z}^1_1, \mathbf{z}^1_2) = (x^i_1, x^i_2) \left( \begin{array}{cc|cc}
1 & \alpha & 1 & \alpha \\
1 & 2\alpha + 2 & \alpha & 2
\end{array} \right) = $$  
$$ \left( \begin{array}{cc|cc}
 x^i_1 + x^i_2, & \alpha x^i_1 + (2\alpha + 2) x^i_2 & x^i_1 + \alpha x^i_2, & \alpha x^i_1 + 2 x^i_2  
\end{array} \right) \in \mathbb{F}_9^4 . $$
The first server stores $ \mathbf{z}^1_1 = (x^i_1 + x^i_2, \alpha x^i_1 + (2\alpha + 2) x^i_2) \in \mathbb{F}_9^2 $ and the second server stores $ \mathbf{z}^1_2 = (x^i_1 + \alpha x^i_2, \alpha x^i_1 + 2 x^i_2) \in \mathbb{F}_9^2 $. 

We now show the three steps of the scheme for the first of the two iterations in order to recover the first file (i.e. $ i=1 $):

\textbf{Step 1, Queries:} Generate independently and uniformly at random $ \mathbf{w}^\ell_1, \mathbf{w}^\ell_2 \in \mathbb{F}_9^2 $ and, for $ j = 1,2 $, compute
$$ \mathbf{d}^\ell_j = (w^\ell_{j,1}, w^\ell_{j,2}) \left( \begin{array}{cc|cc}
1 & \alpha & 1 & \alpha \\
1 & 2\alpha + 2 & \alpha & 2
\end{array} \right) = $$  
$$ \left( \begin{array}{cc|cc}
 w^\ell_{j,1} + w^\ell_{j,2}, & \alpha w^\ell_{j,1} + (2\alpha + 2) w^\ell_{j,2} & w^\ell_{j,1} + \alpha w^\ell_{j,2}, & \alpha w^\ell_{j,1} + 2 w^\ell_{j,2}  
\end{array} \right) \in \mathbb{F}_9^4 . $$
By (\ref{eq def matrix representation map}) and (\ref{eq matrix I truncated}), we have 
$$ \mathbf{e}^1_1 = \left( M_{\boldsymbol\beta}^{-1}\left( \begin{array}{cc}
1 & 0 \\
0 & 0 
\end{array} \right), \mathbf{0} \right) = (\beta_1, 0, \ldots, 0) = (1,0, \ldots, 0) \in \mathbb{F}_9^{2m}, $$
and $ \mathbf{e}^1_2 = \mathbf{0} \in \mathbb{F}_9^{2m} $. Hence the queries for the two servers are the vectors in $ \mathbb{F}_9^{2m} $
$$ \begin{array}{rcl}
\mathbf{q}^1_1 & = & (\mathbf{d}^1_1, \mathbf{d}^2_1, \ldots, \mathbf{d}^m_1) + (1,0, \ldots, 0) , \textrm{ and} \\
\mathbf{q}^1_2 & = & (\mathbf{d}^1_2, \mathbf{d}^2_2, \ldots, \mathbf{d}^m_2),
\end{array} $$
respectively.

\textbf{Step 2, Responses:} The responses from the servers are the vectors in $ \mathbb{F}_9^2 $
$$ \begin{array}{rclcl}
\mathbf{r}^1_1 & = & \sum_{\ell = 1}^m ( \mathbf{z}^\ell_1 \star \mathbf{d}^\ell_1 ) + \mathbf{z}^1_1 \left( \begin{array}{cc}
1 & 0 \\
0 & 0 
\end{array} \right) & = & \sum_{\ell = 1}^m ( \mathbf{z}^\ell_1 \star \mathbf{d}^\ell_1 ) + (z^1_1,0) , \textrm{ and} \\
\mathbf{r}^1_2 & = & \sum_{\ell = 1}^m ( \mathbf{z}^\ell_2 \star \mathbf{d}^\ell_2 ), & & 
\end{array} $$
respectively. 

\textbf{Step 3, File reconstruction:} The vector $ \sum_{\ell = 1}^m ( \mathbf{z}^\ell_1 \star \mathbf{d}^\ell_1 ) $ is a codeword in $ \mathcal{C}_{4,3}(\mathbf{a},\boldsymbol\beta) $. Generator and parity-check matrices of such a code can be chosen, respectively, as
$$ G = \left( \begin{array}{cccc}
1 & \alpha & 1 & \alpha \\
1 & 2 \alpha + 2 & \alpha & 2 \\
1 & \alpha & 2 & 2 \alpha
\end{array} \right) \quad \textrm{and} \quad H = \left( \begin{array}{cccc}
2 \alpha & 1 & 1 & 2\alpha + 2
\end{array} \right). $$
Hence we have that
$$ \mathbf{r}^1 H^T = (\mathbf{r}^1_1, \mathbf{r}^1_2) H^T = (z^1_1,0,0,0) H^T = 2\alpha z^1_1. $$

Clearly we can recover $ z^1_1 $ from $ 2 \alpha z^1_1 $ since $ 2\alpha \neq 0 $ is known. In this way, at the end of the first iteration we have obtained $ z^1_1 $. Analogously, in the second iteration we would obtain $ z^1_2 $. In other words, at the end of the whole process we recover
$$ (z^1_1, z^1_2) = \mathbf{z}^1_1 = (x^1_1, x^1_2) \left( \begin{array}{cc}
1 & \alpha \\
1 & 2\alpha + 2
\end{array} \right), $$
and since such a matrix is invertible, we may recover the first file $ (x^1_1, x^1_2) \in \mathbb{F}_9^2 $.

\section{Further considerations} \label{sec further considerations}

\subsection{Unequal localities and local distances} \label{subsec pir unequal localities}

The results in this work may be extended, in a straightforward way, to the case where each local group $ \Gamma_j $ has a different locality $ r_j $ and local distance $ \delta_j $, for $ j = 1,2, \ldots, g $. See the next subsection for a further extension. The MR-LRC in Construction \ref{construction 1} based on linearized Reed-Solomon codes can be extended to arbitrary equal or unequal localities and local distances as long as the field is $ \mathbb{F}_{q^r} $, where $ q > g $ and $ r \geq \max \{ r_1, r_2, \ldots, r_g \} $. See \cite[Sec. III]{universal-lrc}. By choosing systematic generator matrices of the MDS local linear codes (which now are different), the remaining MDS storage code after removing all local redundancies is again a $ k $-dimensional linearized Reed-Solomon code, although of length $ N = \sum_{j=1}^g r_j $.

For $ \tau \geq 1 $ colluding nodes, the achieved rate would still be $ R = (N - k - \tau + 1)/ N $. However, $ t \geq 1 $ colluding local groups correspond in this case to a number of colluding servers that is different for different sets of local groups. In other words, the collusion pattern \cite{collusion} is generated by maximal collusion sets of different sizes. We may still proceed with the strategy in this work, that is, we may consider protecting against any $ \tau = \max \{ \sum_{j \in T} r_j \mid T \subseteq [g], |T| = t \} $ colluding nodes. Improvements on the rate $ R = (N - k - \tau + 1)/ N $ may be possible for certain cases (as in \cite[Sec. V]{collusion}), which we leave open.

Finally, the main motivation behind unequal localities and local distances is that some local groups may require faster and/or more robust repair, for instance due to hot data, while global erasure correction may be improved by considering the different localities and local distances. See \cite{chen-hao, kadhe, zeh-multiple} for more details.

\subsection{Arbitrary local linear codes and hierarchical localities} \label{subsec arbitrary local and hierarchical}

As before, the results in this work may be extended, in a straightforward way, to the case where each local group $ \Gamma_j $ uses an arbitrary $ r_j $-dimensional local linear code $ \mathcal{C}_{loc}^j \subseteq \mathbb{F}_q^{n_j} $, where $ \Gamma_j = |n_j| $ and $ r_j + \delta_j - 1 \leq n_j $, where $ \delta_j = \dd(\mathcal{C}_{loc}^j) $, for $ j = 1,2, \ldots, g $. Construction \ref{construction 1} still gives an MR-LRC for any choice of local linear codes, see \cite[Sec. IV]{universal-lrc}. Furthermore, the local codes may be dynamically, efficiently and locally updated in order to adapt to different distributed storage configurations, as discussed in \cite[Subsec. V-A]{universal-lrc}. In particular, the local codes may be in turn MR-LRCs, giving rise to multi-layer or hierarchical MR-LRCs (see \cite[Def. 7]{universal-lrc}, \cite[Subsec. V-B]{universal-lrc} and \cite[Sec. II]{MR-hierarchical}), which have optimal global distance by \cite[Th. 4]{universal-lrc}. As before, by choosing systematic generator matrices of the local codes, the remaining MDS storage code after removing the local redundancies is a linearized Reed-Solomon code of length $ N = \sum_{j=1}^g r_j $.

\subsection{PIR over linearly coded networks} \label{subsec pir over lin coded networks}

Linear network coding \cite{linearnetwork} permits maximum information flow over a network from a source to several sinks simultaneously in one shot (\textit{multicast}). In \cite{pir-networks}, PIR is considered where each server is formed by a number $ r \geq 1 $ of nodes in the database and communication between the user and each server is through a linearly coded network. 

To avoid mixing information through the network for non-colluding sets of servers, it is assumed in \cite{pir-networks} that the linearly coded networks between the user and the servers are pair-wise disjoint (after removing the user node). This makes the total transfer matrix from the user to the database and back have a block-diagonal shape $ \diag(A_1, A_2, \ldots, A_g) \in \mathbb{F}_q^{gr} $, where $ A_j \in \mathbb{F}_{q^r}^r $ is the transfer matrix from the $ j $th server to the user (we assume square transfer matrices for simplicity). In other words, the total linearly coded network from the user to the servers and back can be considered as a \textit{multishot} linearly coded network as in \cite{secure-multishot}, with one shot per server. The effect of such a channel is simply multiplying codewords by $ \diag(A_1, A_2, \ldots, A_g) $.

It was shown in \cite[Subsec. V-F]{secure-multishot} that multiplying on the right a linearized Reed-Solomon code, as in Definition \ref{def lin RS codes}, by a block-diagonal matrix $ \diag(A_1, A_2, \ldots, A_g) \in \mathbb{F}_q^{gr} $ gives again a linearized Reed-Solomon code, possibly with erasures if the matrices $ A_j $ are not full-rank. Using this fact, our PIR scheme (Subsection \ref{subsec second scheme}) may be used mutatis mutandis in the scenario described in this subsection and in \cite{pir-networks}. In the error-free and erasure-free case, the rate obtained in both works is
$$ R = \frac{N - k - rt + 1}{N}. $$
However, since Gabidulin codes \cite{gabidulin} are used in \cite{pir-networks}, the required field size is $ q_0^{gr} $, where $ q_0 \geq 2 $ is the field size of the underlying linear network code. Note that $ q_0^{gr} $ is exponential in the number of servers $ g $, whereas our scheme would still require the field size $ g^r $, which is polynomial in the number of servers $ g $.

\subsection{Systematic and non-systematic codes}

All of the results in this manuscript hold for any generator and parity-check matrix of the linear codes involved. Observe that we only need: 1) The fact that the coordinate-wise matrix product of two linearized Reed-Solomon codes is a linearized Reed-Solomon code, in Step 3 of our PIR scheme; 2) Using a parity-check matrix, systematic or not, of such a coordinate-wise matrix product of linearized Reed-Solomon codes, in Step 3 of our PIR scheme; and 3) the fact that the remaining MDS code $ \mathcal{C}_\Delta $, after removing all local redundancies $ \Gamma_j \setminus \Delta_j $, is a linearized Reed-Solomon code. To ensure the last condition, we made the assumption in Section \ref{sec PIR} that the generator matrix $ A \in \mathbb{F}_q^{r \times (r+ \delta -1)} $ is systematic, having its first $ r $ columns equal to those of the identity matrix. However, this assumption can be easily lifted. This is because, if $ A_r \in \mathbb{F}_q^{r \times r} $ is formed by the first $ r $ columns of the matrix $ A $, whether $ A_r $ is the identity matrix or not, it holds that
$$ \mathcal{C}_{N,k}(\mathbf{a}, \boldsymbol\beta) A_r = \mathcal{C}_{N,k}(\mathbf{a}, \boldsymbol\beta A_r), $$
with notation as in Definition \ref{def lin RS codes}, where $ \boldsymbol\beta A_r \in \mathbb{F}_{q^r}^r $ is another ordered basis of $ \mathbb{F}_{q^r} $ over $ \mathbb{F}_q $ (see also \cite[Subsec. V-F]{secure-multishot}). Thus our PIR scheme still works in this case, simply by replacing $ \boldsymbol\beta $ by $ \boldsymbol\beta A_r $.

\section*{Acknowledgement}

The author gratefully acknowledges the support from The Independent Research Fund Denmark (Grant No. DFF-7027-00053B). The author also wishes to thank the anonymous reviewers, who helped improve the presentation of the manuscript.

\small

\end{document}